\documentclass[letterpaper, 10 pt, conference]{ieeeconf}  

\IEEEoverridecommandlockouts                        
\overrideIEEEmargins

\usepackage{amsmath}
\usepackage{amsfonts}
\usepackage{graphicx}
\usepackage{algorithm}
\usepackage{caption}
\usepackage{subcaption}
\usepackage[noend]{algpseudocode}
\usepackage{cite}
\usepackage{xcolor}

\newcommand{\norm}[1]{\left\lVert#1\right\rVert}

\newtheorem{proposition}{Proposition}

\newtheorem{lemma}{Lemma}
\newtheorem{remark}{Remark}

\newtheorem{assumption}{Assumption}

\title{\LARGE \bf
Optimal Minimax Mobile Sensor Scheduling Over a Network
}

\author{Samuel C. Pinto$^1$, Sean B. Andersson$^{1,2}$, Julien M. Hendrickx$^3$, and Christos G. Cassandras$^{2,4}$
\\
$^1$Dept. of Mechanical Engineering, $^2$Division of Systems Engineering,\\ $^4$Dept. of Electrical and Computer Engineering \\
Boston University, Boston, MA 02215, USA \\
$^3$ICTEAM Institute, UCLouvain, Louvain-la-Neuve 1348, Belgium \\
\{samcerq,sanderss,cgc\}@bu.edu, julien.hendrickx@uclouvain.be
\thanks{This work was supported in part by NSF under grants ECCS-1931600, DMS-1664644, CNS-1645681, and CMMI-1562031, by ARPA-E's NEXTCAR program under grant DE-AR0000796, by AFOSR under grant FA9550-19-1-0158,  and by the MathWorks. The work of J. Hendrickx was supported by the “RevealFlight” Concerted Research Action (ARC) of the Federation Wallonie-Bruxelles, by the Incentive Grant for Scientific Research (MIS) “Learning from Pairwise Data” of the F.R.S.-FNRS.
}
}

\begin{document}

\maketitle
\thispagestyle{empty}
\pagestyle{empty}

\begin{abstract}
We investigate the problem of monitoring multiple targets using a single mobile sensor, with the goal of minimizing the maximum estimation error among all the targets over long time horizons. The sensor can move in a network-constrained structure, where it has to plan which targets to visit and for how long to dwell at each node. We prove that in an optimal observation time allocation, the peak uncertainty is the same among all the targets. By further restricting the agent policy to only visit each target once every cycle, we develop a scheme to optimize the agent's behavior that is significantly simpler computationally when compared to previous approaches for similar problems.
\end{abstract} 
\section{INTRODUCTION}
\label{sec:intro}
In many different applications, such as ocean temperature monitoring \cite{lan2013planning} or surveillance in smart cities \cite{Alam:2018ie}, a mobile agent equipped with sensors is responsible for monitoring the state of a system at multiple points of interest. In this setting, a usual goal is to minimize some metric of uncertainty of the monitored variables over long time horizons. We call this problem ``persistent monitoring" (PM). In terms of design, a common approach (see e.g. \cite{cassandras2013optimal} and \cite{Welikala2019P3}) is to consider the agent policy as being composed of two interdependent tasks: searching for an optimal agent trajectory and planning for how long to observe each of the targets. In this work, we model the PM problem as one of an agent moving on a network \cite{Welikala2019P3} and seek a periodic trajectory that ensures each target is visited, while optimizing a given uncertainty metric over long time horizons. PM is closely related to the well known Traveling Salesman Problem (TSP), as both problems try to minimize the time spent traveling between different targets. 
However, in PM, unlike TSP, one has to plan for how long to dwell at each target in order to minimize a metric of the targets state uncertainties (instead of simply the distance traveled).

In our previous work \cite{pinto2020periodicfull,pinto2019monitoring,pinto2020multidimensional}, we investigated the multi-agent version of this problem, with the goal of finding locally optimal movement agent policies that minimized the mean squared estimation error. {One major limitation of this previous work was that each step of the optimization (which was a gradient descent procedure) relied on the numerical integration of matrix ordinary differential equations in order to compute the gradients.} Integrating these equations is a major computational burden and makes the approach in \cite{pinto2020periodicfull} unsuitable for settings with a large number of targets.

In the present paper, we constrain ourselves to considering a single agent in order to establish stronger results than in previous publications{, with the goal of generalizing these stronger results to multi-agent settings in future work.}
Moreover, instead of minimizing the mean squared estimation error, we consider a different optimization metric,
the minimization of the maximum uncertainty over different targets. 
In addition to simplifying the computational burden, this metric may be preferred in some settings. For example, when monitoring safety critical systems that cannot operate over a given threshold (for instance a maximum temperature), one wants to make sure that every target location does not exceed that {``worst case"} limit (as opposed to optimizing an ``average" of the chance of violating it). {Some examples of applications where a critical threshold on the state uncertainty should not be exceeded include monitoring of wildfire or faults in civil infrastructure systems using unmanned aerial vehicles \cite{lin2018kalman, shakhatreh2019unmanned}}. Considering this cost function, {the main contribution of this paper is to show that}, in an optimal schedule all the targets have the same peak uncertainty. A parallel can be drawn between this problem and resource allocation problems, where we often find results showing that all the users have the same utility in the optimal schedule (see e.g. \cite{luss1999equitable}). In this paper, the resource to be allocated is the time for which each target is observed. {We exploit this necessary condition and introduce a novel} gradient-free and computationally inexpensive scheme to optimize the time spent at each target. For visiting sequences where each target is visited only once, this approach provably gives the optimal time that should be spent at each target for a fixed cycle period $T$. This is especially meaningful in scenarios where targets are heterogeneous and, in order to ensure a balanced operation of the entire network of targets, they must be observed for significantly different times.

This paper is organized as follows. In Section \ref{sec:formulation} we formally formulate the problem that we investigate. Section \ref{sec:optimal_time_distribution} establishes a necessary condition for optimality, which is used to derive an optimization scheme of cycles where each target is visited only once, given in Section \ref{sec:optimization_single_visit}. Finally, Section \ref{sec:results} shares some simulation results and Section \ref{sec:conclusion} presents our conclusions and ideas for future work.

\section{PROBLEM FORMULATION}
\label{sec:formulation}
We consider a set $\mathcal{V}$ composed of $M$ nodes (which are interchangeably called targets) in a connected and undirected graph $\mathcal{G}=(\mathcal{V},\mathcal{E})$. The set $\mathcal{E}$ is formed by edges $\{(i,j):\ i,j\ \in \mathcal{V}\}$. Each edge has an associated cost $d_{i,j}$, 
representing the time the agent takes to travel between nodes $i$ and $j$. 

We also assume that each target has an internal state ${\phi_i} \in \mathbb{R}^{L_i}$ with dynamics
\begin{equation}
    \label{eq:dynamics_phi}
    {{\dot{\phi}}_i(t) = A_i{\phi}_i(t) + B_iu_i(t)+ {w}_i(t)},\ \ \ \ i=1,\dots, M,
\end{equation}
where $w_i$ are mutually independent, zero mean, white, Gaussian distributed processes with $E[{w}_i(t){w}_i(t)']=Q_i$ and $Q_i$ is a positive definite matrix. {Since we focus on the estimation problem, we do not discuss the design of the control $u_i(t)$, but we assume that the estimation algorithm has access to the values of $u_i(t)$.}

The mobile agent can navigate through the graph $\mathcal{G}$. When the agent visits a given node $i$, it observes its internal state with a linear observation model given by
\begin{equation}
    \label{eq:observation_model_ij}
    {z}_{i}(t)=H_i{\phi}_i(t)+{v}_{i}(t), \qquad {v}_{i}(t)\sim \mathcal{N}(0,R_i),
\end{equation}
where $v_i$ is also assumed to be white and statistically independent of $v_j$ if $i\neq j$ and $w_k$, $\forall k$.

We can fully describe the behavior of the agent at its $N$ visits by a vector of indices of the targets visited by the agent $\mathcal{Y}=[y_1,y_2,...,y_N]$, $1\leq y_k \leq M$, and the time spent by the agent at each of these visits $\mathcal{T} = [t_1,...,t_N]$, $t_k\geq0$. Note that even though in general $\mathcal{Y}$ and $\mathcal{T}$ can be infinitely long, in this paper we restrict ourselves to periodic policies (as will be discussed in the next subsection) and thus $N$ is finite.

Considering models \eqref{eq:dynamics_phi} and \eqref{eq:observation_model_ij}, {the maximum likelihood estimator} $\hat{\phi}_i(t)$ for the internal state of the targets is a Kalman-Bucy Filter with equations given by
\begin{subequations}
	\begin{align}
    \dot{\hat{{\phi}}}_i(t)&=A_i\hat{{\phi}}_i(t)+{B_iu_i(t)} \nonumber \\&\qquad+\eta_i(t)\Omega_i(t){H}_i'{R}_i^{-1}\left({{z}}_i(t)-{H}_i\hat{{\phi}}_i(t)\right), \label{eq:estimator_dynamics}
    \\
    \dot{\Omega}_i(t) &= A_i\Omega_i(t)+\Omega_i(t)A_i'+Q_i-\eta_i(t)\Omega_i(t){G}_i\Omega_i(t), \label{eq:dynamics_omega} 
	\end{align}
\end{subequations}
where $\Omega_i$ is the covariance matrix {of the estimator $\hat{\phi}_i(t)$}, $G_i=H_i'R_i^{-1}H_i$ and $\eta_i(t)=1$ if target $i$ is observed at time $t$ and $\eta_i(t)=0$ otherwise.

The goal is to design an agent movement policy that minimizes 
the maximum (over time and over all the targets) of a weighted norm of the long term covariance matrix, i.e.,
\begin{equation}
    \label{eq:cost}
    J(\mathcal{Y},\mathcal{T}) = \max_{i\in\{1,..,M\}} \limsup_{t\rightarrow \infty} g_i(\norm{\Omega_i(t)}), 
\end{equation}
where the target-specific (possibly) nonlinear weighting function $g_i(\cdot)$ is a strictly increasing function with $g_i(0)=0$ and $\lim_{x \rightarrow \infty}g_i(x)=\infty$, and $\norm{\cdot}$ is a norm on the space of positive semi-definite matrices. A usual choice is to have $g_i(x)=x$ and $\norm{X}=\text{tr}(X)$. An optimal strategy has an associated cost $J^*$ given by
\begin{equation}
       J^*(\mathcal{Y}^*,\mathcal{T}^*) = \min_{\mathcal{Y},\mathcal{T}} J(\mathcal{Y},\mathcal{T}).
\end{equation}

\subsection{Periodic Policies}
In this work we restrict ourselves to periodic policies. Note that in periodic trajectories, the number of parameters necessary to describe the behavior of the agent does not increase with the time horizon. They are therefore suitable for infinite horizon analysis. Moreover, if a target is visited in the period, it will be visited infinitely often, with inter-visit time upper bounded by the period. This notion fits very well into the paradigm of persistent monitoring, since one of the high level goals is exactly to ensure this persistence of visits.

In order to analyze the infinite horizon behavior of the covariance matrix, we initially make the following natural assumption that ensures the uncertainty of the internal state $\phi_i$ can be finite over long time horizons.

\begin{assumption}
    The pair $(A_i,H_i)$ is detectable, $\forall \ i\in\{1,...,M\}$.
\end{assumption}

We now revisit the following Proposition, initially introduced in \cite{pinto2020periodicfull}:
\begin{proposition}
    \label{prop:unique_attractive_sol_riccati_eq}
    If $\eta_i(t)$ is $T$-periodic and $\eta_i(t) > 0$ for $t$ in some interval $[a,b]\in[0,T]$ with $b>a$, then, under Assumption 1, there exists a unique non-negative stabilizing $T$-periodic solution $\bar{\Omega}_i(t)$ to \eqref{eq:dynamics_omega}.
\end{proposition}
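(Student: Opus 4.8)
The plan is to recast the existence of a $T$-periodic solution as a fixed-point problem for the period (monodromy) map and then to exploit the order structure of the Riccati flow. Concretely, for a symmetric $X\succeq 0$ let $\Omega_i(t;X)$ denote the solution of \eqref{eq:dynamics_omega} on $[0,T]$ with $\Omega_i(0)=X$, and define the map $\Phi(X)=\Omega_i(T;X)$. Because the quadratic term enters with a negative sign, the solution is dominated by that of the linear Lyapunov equation obtained by dropping $-\eta_i\Omega_iG_i\Omega_i$, so $\Omega_i(\cdot;X)$ stays finite and positive semidefinite throughout $[0,T]$; hence $\Phi$ is well defined on the cone of positive semidefinite matrices, and a $T$-periodic solution of \eqref{eq:dynamics_omega} is exactly a fixed point of $\Phi$.

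First I would establish monotonicity of $\Phi$. Writing $D=\Omega_i^{(1)}-\Omega_i^{(2)}$ for two solutions and using the symmetric splitting of the quadratic term, $D$ obeys the linear matrix equation $\dot D = S(t)D+DS(t)'$ with $S(t)=A_i-\tfrac{1}{2}\eta_i(t)\big(\Omega_i^{(1)}(t)+\Omega_i^{(2)}(t)\big)G_i$. Its solution is $D(t)=\Psi(t)D(0)\Psi(t)'$, where $\Psi$ is the transition matrix of $\dot x = S(t)x$, and this form preserves the positive semidefinite cone; therefore $X_1\succeq X_2$ implies $\Phi(X_1)\succeq\Phi(X_2)$.

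The core step is to trap the flow in an invariant order interval, and this is where Assumption 1 and the hypothesis $\eta_i>0$ on $[a,b]$ are used. Since $\Phi(0)\succeq 0$, it remains to produce $\bar X \succeq 0$ with $\Phi(\bar X)\preceq \bar X$. On $[a,b]$ the adaptive Riccati gain contracts large covariances quadratically (the stronger the larger $\Omega_i$ is), while on the complementary part of the period the filter runs open-loop and the covariance grows by a bounded amount governed by the propagation $e^{A_i(T-(b-a))}$ and the accumulated noise $Q_i$; detectability guarantees that every mode which grows open-loop is seen through $H_i$ during $[a,b]$, so a sufficiently large $\bar X$ is mapped back into $[0,\bar X]$. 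I expect this boundedness estimate under intermittent observation to be the main obstacle, precisely because the contraction acts only on a subinterval and must overcome the open-loop growth; the cleanest way to discharge it is to phrase the hypotheses as periodic stabilizability of $(A_i,Q_i^{1/2})$ (automatic since $Q_i\succ0$) and periodic detectability of $(A_i,\sqrt{\eta_i}\,H_i)$, and to invoke the boundedness of solutions of periodically detectable Riccati equations.

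With monotonicity and the invariant interval in hand, existence follows from a monotone iteration: the sequence $\Phi^n(0)$ is nondecreasing and bounded above by $\bar X$, hence converges to some $\bar\Omega_i(0)$, and continuity of $\Phi$ makes the limit a fixed point whose flow is the sought $\bar\Omega_i(t)$. For the stabilizing property and uniqueness I would linearize \eqref{eq:dynamics_omega} about $\bar\Omega_i$, obtaining $\dot D = \bar S(t)D+D\bar S(t)'$ with $\bar S(t)=A_i-\eta_i(t)\bar\Omega_i(t)G_i$; periodic detectability implies the monodromy matrix of $\dot x=\bar S(t)x$ is Schur, which is exactly the definition of a stabilizing solution. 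Uniqueness then closes the argument: for any two periodic solutions the difference satisfies the symmetric linear equation of the second paragraph, $T$-periodicity forces $D(0)=\Psi(T)D(0)\Psi(T)'$, and the Schur property of the monodromy yields $D(0)=0$, so the two solutions coincide.
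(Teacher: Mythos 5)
The paper does not actually prove this proposition: it is imported verbatim from \cite{pinto2020periodicfull}, so there is no in-text argument to compare against. Taken on its own terms, your sketch follows the standard route for periodic Riccati equations (monodromy/period map, order-preservation of the Riccati flow via the symmetric splitting of the quadratic term, an invariant order interval from periodic detectability of $(A_i,\sqrt{\eta_i}\,H_i)$ plus stabilizability from $Q_i\succ 0$, monotone iteration from $0$ for existence, and a Lyapunov/detectability argument for the stabilizing property). That skeleton is sound, and your observation that detectability of the constant pair $(A_i,H_i)$ transfers to the periodic pair because observability of a time-invariant pair over any interval of positive length is equivalent to observability is exactly the point where the hypothesis $\eta_i>0$ on $[a,b]$, $b>a$, enters. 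The two places where you lean on cited facts rather than proving them (boundedness of the flow under periodic detectability, and ``any nonnegative periodic solution is stabilizing'') are genuine theorems, but they are the substance of the result, so a referee would want them either proved or attributed precisely (e.g.\ to Bittanti--Colaneri--De Nicolao on the periodic Riccati equation).

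There is one concrete gap: the uniqueness step. You reuse the symmetric splitting, so the difference of two periodic solutions satisfies $\dot D=S(t)D+DS(t)'$ with $S=A_i-\tfrac{1}{2}\eta_i(\Omega_i^{(1)}+\Omega_i^{(2)})G_i$, and you then invoke ``the Schur property of the monodromy.'' But the monodromy you established as Schur is that of $A_i-\eta_i\bar{\Omega}_iG_i$ for a single solution; nothing in your argument shows that the \emph{averaged} closed-loop matrix $S$ has Schur monodromy, and stability is not preserved under averaging of the feedback term. The standard repair is the asymmetric factorization $\Omega^{(1)}G\Omega^{(1)}-\Omega^{(2)}G\Omega^{(2)}=\Omega^{(1)}GD+DG\Omega^{(2)}$, which gives $\dot D=\bigl(A_i-\eta_i\Omega_i^{(1)}G_i\bigr)D+D\bigl(A_i-\eta_i\Omega_i^{(2)}G_i\bigr)'$, hence $D(0)=\Psi_1(T)D(0)\Psi_2(T)'$; since both solutions are stabilizing (which is all the proposition requires for uniqueness, and which your step 5 supplies), iterating sends the right-hand side to zero and forces $D(0)=0$. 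With that substitution the argument closes.
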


The consequence of this proposition is that, in a periodic schedule, visiting a given target for any finite amount of time is enough to guarantee that the covariance will converge to a limit cycle that does not depend on the initial conditions. 
In this paper we will focus on cases where $A_i$ is unstable, since it requires targets to be visited infinitely often to ensure a bounded cost, and thus fits well into the persistent monitoring framework. {Note that in some real applications, such as the surveillance of wildfires or the expansion of cracks in civil infrastructure \cite{lin2018kalman,shakhatreh2019unmanned}, the dynamics of the state are often unstable.}
In addition, to avoid the degenerate case where $Q_i$ is semidefinite such that covariance could potentially stay null in some direction, we assume that $Q_i$ is positive definite. These assumptions are now formalized:

\begin{assumption}
    \label{assump:instablity}
    $A_i$ is not stable and $Q_i$ is positive definite.
\end{assumption}

\begin{remark}

The optimization approach we develop in this paper could be easily adapted to also handle targets with stable $A_i$. However, if $A_i$ is allowed to be stable, it could be the case that the optimal action would be never to observe such a target, since its uncertainty would be bounded even if the agent never visits it. In this paper, due to space constraints, we mainly focus on the optimization of dwelling times $\mathcal{T}$ and { by assuming that $A_i$ is unstable, we have that the set of visited targets is fixed. In future work we plan to consider both stable and unstable $A_i$, along with criteria for the inclusion or exclusion of targets in the visiting sequence. A consequence of Assumption \ref{assump:instablity} is that every target should be visited in one period of the cycle, however the optimal visiting order is still to be determined.} 
\end{remark}

\section{PROPERTIES OF AN OPTIMAL POLICY}
\label{sec:optimal_time_distribution}
\subsection{Target's Perspective of a Periodic Policy}
In this subsection we briefly define some necessary notation. We acknowledge that this notation is complex, however, most of the terms we define here will be used only in the proofs of Props. \ref{prop:detivative_ton} and \ref{prop:derivative_toff}, both of which have a very intuitive interpretation.

We recall that the goal is to optimize the visiting sequence $\mathcal{Y}$ and dwelling times at the visited targets $\mathcal{T}$. One
way to interpret the vectors $\mathcal{Y}$ and $\mathcal{T}$ is that they capture the agent perspective of the trajectory. Each individual target, however, does not need all this information in order to fully determine the behavior of its uncertainty. From the target's perspective, all the information necessary to characterize the evolution of its covariance matrix is to define, for a single cycle of the periodic schedule, how long the target is observed for and the time spent between consecutive observations. Without loss of generality, we assume that the length $N$ of the vectors $\mathcal{Y}$ and $\mathcal{T}$ is one full cycle of the periodic trajectory.

We now discuss the conversion of indices from the agent's perspective to the target's.
First, for each target $i$, we group all the instances where this target was visited and define the vector $\mathcal{P}_i=[p_i^1,...,p_i^{N_i}]$, {where $N_i$ is the number of times target $i$ is visited in a cycle and $p_i^j$ is the position in the schedule $\mathcal{Y}$ that $i$ is visited the $j$-th time, i.e. $
p_i^j\in\{p |\ y_p=i,\ y_p\in{\mathcal{Y}}\}$ and $p_i^j<p_i^k$ if $j<k$}.
{Consider, for example, a visiting sequence of length $N=3$ given by $\mathcal{Y} = [1,2,1]$. Then, $N_1=2$, $N_2=1$, $\mathcal{P}_1=[p_1^1,p_1^2]=[1,3]$ and  $\mathcal{P}_2=[p_2^1]=[2]$.}
{Additionally, we define the tuple $(a(q),b(q))$ as the pair such that $p_{a(q)}^{b(q)}=q$. Hence, $a(q)$ is the target being visited at the agent's $q$-th visit and $b(q)$ represents how many times this target has been visited so far (including the current visit).}

Finally, we highlight some important timings and covariance matrix values at the steady state cycle, illustrated in Fig. \ref{fig:multiple_observations_same_target}.

\begin{figure}[h!]
    \centering
    \includegraphics[width=0.9\columnwidth]{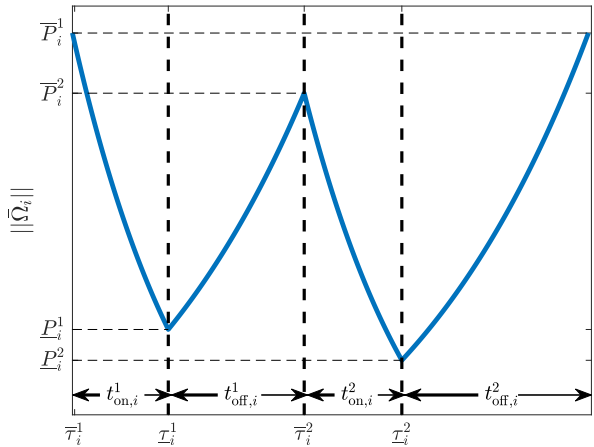}
    \caption{Temporal evolution of the steady state covariance matrix and waiting/observation times.}
    \label{fig:multiple_observations_same_target}
\end{figure}

In Fig. \ref{fig:multiple_observations_same_target}, $t_{\text{on},i}^k$ is the amount of time that target $i$ was observed when it was visited for the $k$-th time. $t_{\text{off},i}^k$ is the time spent between its $k$-th and $k+1$-th visits. These times are given by
\begin{subequations}
    \label{eq:targets_perspective_variables}
    \begin{equation}
        t_{\text{on},i}^k = t_{p_i^k},\qquad t_{p_i^k}\in\mathcal{T}, 
    \end{equation}
    \begin{equation}
        \label{eq:def_toff}
        t_{\text{off},i}^k= d_{i,a(p_i^k+1)}+\sum_{q=p_i^k+1}^{p_i^{k+1}-1}\left(t_{\text{on},a(q)}^{b(q)}+d_{a(q),a(q+1)}\right).
    \end{equation}
\end{subequations}
        
In \eqref{eq:def_toff}, $d_{i,a(c(k,i)+1)}$ is the travel time between target $i$ and the next target the agent visits. The index $q$ varies over all the visits the agent makes until it returns to target $i$ (note that $c(k,i)$ and $c(k+1,i)$ give the index of two consecutive visits to target $i$, from the agent's perspective). Moreover,  $t_{\text{on},a(q)}^{b(q)}$ is the time the agent spent at its $q$-th visit and $d_{a(q),a(q+1)}$ is the travel time between the agent's $q$-th and $(q+1)$-th visit.

We define $\overline{\tau}_i^k$ as the instant when the $k-th$ visit to target $i$ started and $\overline{P}_i^k$ its covariance at the beginning of that visit. Intuitively, variables with a bar over them refer to a local maximum peak ($\overline{P}_i^k$) and instant ($\overline{\tau}_i^k$). Similarly, $\underline{\tau}_i^k$ and $\underline{P}_i^k$ are respectively the time instant and the covariance at the end of the $k$-th visit and represent locally minimum peaks, as represented in Fig. \ref{fig:multiple_observations_same_target}. More formally:
\begin{subequations}
    \label{eq:targets_perspective_variables_2}
    \begin{equation}
        \overline{\tau}_i^k = \sum_{m=0}^{k-1}\left(t_{\text{on},i}^m+t_{\text{off},i}^m\right), \qquad \underline{\tau}_i^k =\overline{\tau}_i^k +t_{\text{on},i}^k,
    \end{equation}
        \begin{equation}
        \overline{P}_i^k = \bar{\Omega}(\overline{\tau}_i^k), \qquad\underline{P}_i^k = \bar{\Omega}(\underline{\tau}_i^k),
        \end{equation}
\end{subequations}
where $\bar{\Omega}_i(t)$ is the steady-state (periodic) covariance matrix.
Also, since both the agent trajectories and the steady state covariance are periodic, we have $t_{\text{on},i}^k=t_{\text{on},i}^{k+N_i}$, $t_{\text{off},i}^k=t_{\text{on},i}^{k+N_i}$, $\overline{P}_i^k=\overline{P}_i^{k+N_i}$ and $\underline{P}_i^k=\underline{P}_i^{k+N_i}$. For a given period $T$, the visiting instants are spaced by $T$, that is $\overline{\tau}_i^{k+N_i}=T+\overline{\tau}_i^{k}$ and $\underline{\tau}_i^{k+N_i}=T+\underline{\tau}_i^{k}$.

As noted previously, while this notation is cumbersome it is needed for the proofs of Props. \ref{prop:detivative_ton} and \ref{prop:derivative_toff}. Once those are established, only $t_{\text{on},i}^k$, $t_{\text{off},i}^k$ and $\overline{P}_i^k$ will be used in the remainder of the text.

\subsection{Necessary Condition for Optimality}

In this section, our main goal is to show that, for every visiting sequence $\mathcal{Y}$ that contains every target, an optimal allocation of dwelling times $\mathcal{T}$ must be such that $\limsup_{t\rightarrow\infty}g_i(\norm{\Omega_i(t)})$ is the same among all the targets. Towards this main goal of this section, first we claim the following auxiliary results.

\begin{lemma}
    \label{lemma:pd_nd_time_derivative}
    $\dot{\bar{\Omega}}_i(t) \prec 0$ if $\eta_i(t)=1$ and $\dot{\bar{\Omega}}_i(t) \succ 0$ if $\eta_i(t)=0$.
\end{lemma}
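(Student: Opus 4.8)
The plan is to study the matrix $D_i(t):=\dot{\bar{\Omega}}_i(t)$ directly. On the interior of any maximal interval on which $\eta_i$ is constant (an ``on'' interval where $\eta_i=1$ or an ``off'' interval where $\eta_i=0$), $\bar{\Omega}_i$ is a smooth solution of \eqref{eq:dynamics_omega}, so I would differentiate that equation in time to obtain a homogeneous linear matrix ODE for $D_i$,
\begin{equation}
\dot{D}_i=\tilde{A}_i D_i + D_i \tilde{A}_i',\qquad \tilde{A}_i = A_i-\eta_i\,\bar{\Omega}_i G_i,
\end{equation}
with $\tilde{A}_i=A_i$ on off intervals. Its solution is the congruence $D_i(t)=\Phi_i(t,s)\,D_i(s)\,\Phi_i(t,s)'$, where $\Phi_i$ is the invertible state-transition matrix of $\tilde{A}_i$. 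By Sylvester's law of inertia the signature of $D_i$ (its numbers of positive, negative and zero eigenvalues) is constant on each such interval, so it suffices to determine the sign of $D_i$ at a single instant of every on interval and every off interval; the assertion is then that this sign is $\prec 0$ on on intervals and $\succ 0$ on off intervals.

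To fix these signs I would use periodicity together with $Q_i\succ 0$ and Assumption \ref{assump:instablity}. On an off interval $D_i(s)=A_i\bar{\Omega}_i(s)+\bar{\Omega}_i(s)A_i'+Q_i$, and the scalar case is transparent, since there $\dot{\bar{\omega}}_i=2a_i\bar{\omega}_i+q_i>0$ automatically. In the matrix case positive definiteness is not automatic but is forced by periodicity: the off dynamics are the only mechanism that enlarges $\bar{\Omega}_i$, while on the on intervals the measurement term strictly contracts it, so if $D_i$ failed to be positive definite somewhere on an off interval the congruence relation would propagate a persistent non-increasing direction through the whole interval, preventing $\bar{\Omega}_i$ from rising to the peak it must reach each cycle and contradicting the existence and uniqueness of the attractive periodic solution in Proposition \ref{prop:unique_attractive_sol_riccati_eq}. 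The on-interval sign $D_i\prec 0$ follows symmetrically, the natural comparison object being the stabilizing solution $\Omega_i^{+}$ of $A_i\Omega_i^{+}+\Omega_i^{+}A_i'+Q_i-\Omega_i^{+}G_i\Omega_i^{+}=0$ (which exists under Assumption~1 and $Q_i\succ 0$): the comparison principle for the Riccati flow gives $\bar{\Omega}_i\succeq\Omega_i^{+}$ during observation, since instability of $A_i$ makes the peak overshoot $\Omega_i^{+}$ and the growth accumulated on the off intervals could otherwise never be undone, and combining this with the congruence relation and periodicity pins $D_i\prec 0$. The discontinuities at the switches are consistent with this picture but do not settle it: $D_i(\cdot^{+})-D_i(\cdot^{-})$ equals $+\bar{\Omega}_iG_i\bar{\Omega}_i\succeq 0$ when observation ends and $-\bar{\Omega}_iG_i\bar{\Omega}_i\preceq 0$ when it begins.

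The step I expect to be the main obstacle is exactly this sign-pinning in the coupled, non-normal setting. The congruence argument is robust and reduces everything to one instant per interval, but on its own it only shows that the signature of $D_i$ is constant, not that it is definite rather than indefinite; because $A_i-\bar{\Omega}_i G_i$ (and $A_i$) need not be normal, ``being above the equilibrium'' does not imply a sign-definite time derivative pointwise, so purely local analysis cannot conclude. The definiteness must instead be extracted from the global periodic structure --- uniqueness and attractiveness from Proposition \ref{prop:unique_attractive_sol_riccati_eq} and instability from Assumption \ref{assump:instablity} --- and turning the intuition ``a wrong sign would forbid the periodic return'' into a rigorous contradiction, while simultaneously accommodating several on/off intervals per cycle, is where the real effort is concentrated.
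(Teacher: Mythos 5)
The paper does not actually write out a proof of this lemma: it states that the result ``is a direct application of Theorem 8.5 in \cite{lawson2006symplectic}'' and omits the details, so there is no in-paper argument to compare with yours line by line. Your opening move --- differentiating \eqref{eq:dynamics_omega} on each interval where $\eta_i$ is constant to get $\dot D_i=\tilde A_iD_i+D_i\tilde A_i'$, hence the congruence $D_i(t)=\Phi_i(t,s)D_i(s)\Phi_i(t,s)'$ and constant inertia of $D_i$ on every on/off piece --- is correct, and it is exactly the monotonicity machinery that the cited theorem packages. Your computation of the jumps $\pm\bar\Omega_iG_i\bar\Omega_i$ at the switching instants is also right.

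The genuine gap is the sign-pinning step, which you yourself flag as unresolved; as written it is a restatement of the claim rather than a proof. Two concrete problems. First, on the off interval you argue that a failure of $D_i\succ 0$ would ``prevent $\bar\Omega_i$ from rising to the peak it must reach each cycle.'' But periodicity only gives $\bar\Omega_i(t+T)=\bar\Omega_i(t)$, i.e.\ $\int_0^T D_i\,dt=0$; it does not a priori give $\overline{P}_i^k\succeq\underline{P}_i^k$ in the Loewner order --- that ordering is essentially the content of the lemma, so invoking it is circular. Moreover, the non-increasing direction that the congruence propagates is the rotating direction $\Phi_i(t,s)^{-T}v$, not a fixed vector, so its persistence contradicts nothing about $\int D_i\,dt$ evaluated on fixed directions. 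Second, on the on interval the comparison $\bar\Omega_i\succeq\Omega_i^{+}$ is itself unproved (nothing shown forces the periodic orbit to sit above the stabilizing ARE solution throughout the observation window), and even if granted you concede it does not yield $D_i\prec 0$ pointwise when $A_i-\bar\Omega_iG_i$ is not normal. What is missing is the actual mechanism that converts the global periodic structure into a definite sign at one instant per interval --- for instance, the Loewner order-preservation of the Riccati flow combined with a monotone iteration of the period map, which is precisely what the Lawson--Lim result supplies. Without that ingredient your argument establishes only that $D_i$ has constant inertia on each smooth piece, not that this inertia is the one asserted by the lemma.
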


{The interpretation of this lemma is very intuitive: the steady state covariance decreases under sensing and increases otherwise.} The proof is a direct application of {Theorem 8.5} in \cite{lawson2006symplectic} and is omitted for space reasons.

\begin{lemma}
    \label{lemma:max_on_peak}
    If target $i$ is visited for a non degenerate amount of time, then $\limsup_{t\rightarrow \infty}g_i\left(\norm{{\Omega}_i(t)}\right)=\max_{1\leq k \leq N_i}g_i\left(\norm{\overline{P}_i^k}\right)$
\end{lemma}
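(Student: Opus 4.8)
The plan is to first reduce the claim about the actual covariance $\Omega_i(t)$ to a statement about the steady-state periodic covariance $\bar{\Omega}_i(t)$, and then to exploit the sawtooth (piecewise-monotone) shape of $\bar{\Omega}_i$ guaranteed by Lemma \ref{lemma:pd_nd_time_derivative}. Since target $i$ is visited for a non-degenerate amount of time, Proposition \ref{prop:unique_attractive_sol_riccati_eq} applies, so $\Omega_i(t)$ converges to the unique $T$-periodic solution $\bar{\Omega}_i(t)$, i.e. $\norm{\Omega_i(t)-\bar{\Omega}_i(t)}\to 0$. Because $g_i(\norm{\cdot})$ is continuous and $\bar{\Omega}_i$ is continuous and $T$-periodic, the transient does not affect the limsup and
\begin{equation*}
\limsup_{t\to\infty}g_i\!\left(\norm{\Omega_i(t)}\right)=\max_{t\in[0,T]}g_i\!\left(\norm{\bar{\Omega}_i(t)}\right).
\end{equation*}
As $g_i$ is strictly increasing and continuous, the maximizing $t$ is the one maximizing $\norm{\bar{\Omega}_i(t)}$, so it suffices to prove $\max_{t\in[0,T]}\norm{\bar{\Omega}_i(t)}=\max_{1\le k\le N_i}\norm{\overline{P}_i^k}$.

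Next I would characterize where this maximum is attained using the sign of the time derivative. By Lemma \ref{lemma:pd_nd_time_derivative}, on each observation interval $[\overline{\tau}_i^k,\underline{\tau}_i^k]$ we have $\dot{\bar{\Omega}}_i\prec 0$, and on each waiting interval $[\underline{\tau}_i^k,\overline{\tau}_i^{k+1}]$ we have $\dot{\bar{\Omega}}_i\succ 0$. Integrating these sign-definite derivatives in the Loewner order shows that $\bar{\Omega}_i$ is monotonically decreasing (resp. increasing) in the matrix sense on observation (resp. waiting) intervals; hence every local maximum in the Loewner order is attained at the start of a visit, namely at $\overline{\tau}_i^k$ with value $\overline{P}_i^k$. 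Concretely, for any $t$ lying in the observation interval indexed by $k$ one gets $\bar{\Omega}_i(t)\preceq\bar{\Omega}_i(\overline{\tau}_i^k)=\overline{P}_i^k$, and for any $t$ in the following waiting interval one gets $\bar{\Omega}_i(t)\preceq\bar{\Omega}_i(\overline{\tau}_i^{k+1})=\overline{P}_i^{k+1}$, so in all cases $\bar{\Omega}_i(t)$ is dominated in the Loewner order by some peak $\overline{P}_i^k$.

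Finally I would pass from the Loewner order to the scalar norm. The norms relevant here (trace, spectral, Frobenius, and more generally any Loewner-monotone norm) satisfy $0\preceq X\preceq Y\Rightarrow \norm{X}\le\norm{Y}$; combining this with the domination $\bar{\Omega}_i(t)\preceq\overline{P}_i^k$ yields $\norm{\bar{\Omega}_i(t)}\le\max_{1\le k\le N_i}\norm{\overline{P}_i^k}$ for every $t$, while the bound is attained at $t=\overline{\tau}_i^k$, establishing the required equality and hence the lemma. The hard part will be exactly this last step: the Loewner order is only a partial order, so $\dot{\bar{\Omega}}_i\prec0$ does not by itself force $\tfrac{d}{dt}\norm{\bar{\Omega}_i(t)}<0$ for an arbitrary norm, and one cannot simply argue scalar monotonicity of $\norm{\bar{\Omega}_i(t)}$ interval by interval. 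The clean resolution is to keep all comparisons at the level of the matrix inequalities and invoke monotonicity of the norm only at the very end, which is why the reduction in the first paragraph to comparing matrices rather than their norms is the crucial structural move.
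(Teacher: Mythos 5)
Your proposal is correct and follows essentially the same route as the paper: reduce the $\limsup$ to the maximum of the periodic steady-state covariance over one period, then use the sign-definite derivatives from Lemma \ref{lemma:pd_nd_time_derivative} to show the maximum can only occur at the switching instants from $\eta_i=0$ to $\eta_i=1$, i.e., at the peaks $\overline{P}_i^k$. Your explicit remark that the final step requires the norm to be monotone with respect to the Loewner order (as the trace and other standard choices are) is a point the paper's proof glosses over, but it is the same argument.
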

\begin{proof}
    First, note that since $\Omega_i(t)$ converges to the bounded periodic function $\bar{\Omega}_i(t)$, $\limsup_{t\rightarrow \infty}g_i\left(\norm{{\Omega}_i(t)}\right)=\max_{0\leq t\leq T}g_i\left(\norm{\bar{\Omega}_i(t)}\right)$. For any time $t$ for which $\exists\  \epsilon>0$ such that $\eta_i(t+\epsilon)=0$, Lemma \ref{lemma:pd_nd_time_derivative} implies that $\bar{\Omega}_i(t+\epsilon)\succ \bar{\Omega}_i(t)$. 
    Conversely, if $\exists\  \epsilon>0$ $\eta_i(t-\epsilon)=1$, $\bar{\Omega}_i(t-\epsilon)\succ \bar{\Omega}_i(t)$.
    
    Therefore, the maximum value of $g_i\left(\norm{\bar{\Omega}_i}\right)$ can only occur in one of the instants when the target switches from not being observed ($\eta_i=0$) to being observed ($\eta_i=1$). The covariance at these instants is given by $\overline{P}_i^k$.
\end{proof}

Using the results of the two lemmas just established, we now show how the upper peak values $\overline{P}_i$ vary with $t_{\text{on},i}$ and $t_{\text{off},i}$. Note that here we consider the steady state covariance as a function exclusively of $t_{\text{on},i}$ and $t_{\text{off},i}$, since these parameters fully define the steady state behavior of the covariance matrix. Additionally, even though the proofs of the next two propositions are technical, their interpretations are intuitive: when a given target is observed for a longer time, its peak uncertainty will be lower. Conversely, if the time between observations increases, then the peak uncertainties will be higher.

\begin{proposition}
    \label{prop:detivative_ton}
    $\frac{\partial \overline{P}_i^k}{\partial t_{\text{on},i}^m} \prec 0$.
\end{proposition}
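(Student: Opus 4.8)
The plan is to treat target $i$'s steady-state covariance in isolation (justified since its evolution depends only on the sequences $t_{\mathrm{on},i}^k$ and $t_{\mathrm{off},i}^k$) and to analyze the peak-to-peak map induced by one observation cycle. Denote by $\Phi^{\mathrm{on}}(\tau,X)$ and $\Phi^{\mathrm{off}}(\tau,X)$ the time-$\tau$ flows of \eqref{eq:dynamics_omega} with $\eta_i=1$ and $\eta_i=0$, respectively, started from $X$. Since a cycle alternates observation and waiting intervals, the successive peaks obey
\begin{equation}
\overline{P}_i^{k+1}=\Phi^{\mathrm{off}}\!\left(t_{\mathrm{off},i}^k,\ \Phi^{\mathrm{on}}\!\left(t_{\mathrm{on},i}^k,\ \overline{P}_i^k\right)\right),
\end{equation}
together with the periodicity constraint $\overline{P}_i^{k+N_i}=\overline{P}_i^k$, which pins down the $\overline{P}_i^k$ as the unique fixed point of the composite map over one period (existence and uniqueness following from Proposition \ref{prop:unique_attractive_sol_riccati_eq}). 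Crucially, $t_{\mathrm{off},i}^k$ depends only on travel times and on the dwelling times of \emph{other} targets (see \eqref{eq:def_toff}), so it is independent of $t_{\mathrm{on},i}^m$. Differentiating the fixed-point relation with respect to $t_{\mathrm{on},i}^m$ therefore yields a linear recurrence in the sensitivities $D_k:=\partial\overline{P}_i^k/\partial t_{\mathrm{on},i}^m$.

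The first key ingredient is the structure of the linearized Riccati flow. Writing $\bar{\Omega}_i\mapsto\bar{\Omega}_i+\delta$ and linearizing \eqref{eq:dynamics_omega} gives $\dot{\delta}=A_{\mathrm{cl}}(t)\,\delta+\delta\,A_{\mathrm{cl}}(t)'$ with $A_{\mathrm{cl}}=A_i-\eta_i\bar{\Omega}_i G_i$, whose solution is the congruence $\delta(\tau)=\Psi(\tau)\,\delta(0)\,\Psi(\tau)'$, where $\Psi$ is the invertible state-transition matrix of $\dot{\Psi}=A_{\mathrm{cl}}\Psi$. Hence each Jacobian $\partial_X\Phi^{\mathrm{on}}$ and $\partial_X\Phi^{\mathrm{off}}$ is a congruence map and so sends negative-definite matrices to negative-definite matrices. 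The second ingredient is the source term: the only explicit dependence on $t_{\mathrm{on},i}^m$ enters through $\Phi^{\mathrm{on}}(t_{\mathrm{on},i}^m,\cdot)$ at $k=m$, and $\partial_\tau\Phi^{\mathrm{on}}$ evaluated at the end of the $m$-th visit is exactly the Riccati vector field under sensing, i.e. $\dot{\bar{\Omega}}_i(\underline{\tau}_i^m)\prec0$ by Lemma \ref{lemma:pd_nd_time_derivative}. The recurrence thus reads $D_{k+1}=M_kD_k+S_k$, where $M_k$ is a composition of congruences, $S_k=0$ for $k\ne m$, and $S_m=S$ is the waiting-flow congruence applied to $\dot{\bar{\Omega}}_i(\underline{\tau}_i^m)$, hence $S\prec0$.

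I would then solve this cyclic linear system. Unrolling around one full period and imposing $D_{m+N_i}=D_m$ gives $(I-\Pi)D_m=\Gamma S$, where $\Pi$ is the monodromy congruence $\delta\mapsto\Psi_T\delta\Psi_T'$ over one period and $\Gamma$ a partial composition of the $M_k$. Because the periodic solution is stabilizing (Proposition \ref{prop:unique_attractive_sol_riccati_eq}), the Floquet multipliers of $\Psi_T$ lie strictly inside the unit disk, so the operator $\Pi$ has spectral radius below one and $(I-\Pi)^{-1}=\sum_{n\ge0}\Pi^n$. Each term of this Neumann series is a congruence applied to the negative-definite matrix $\Gamma S$, so the sum is negative definite and $D_m\prec0$. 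Propagating forward through $D_{k+1}=M_kD_k$ for $k\ne m$ and $D_{m+1}=M_mD_m+S$—all congruences of negative-definite matrices, plus the negative-definite source—yields $D_k\prec0$ for every $k$, which is the claim.

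The hard part will be the interplay between the order-preserving (congruence) structure and the periodic boundary condition: a naive forward argument fails because the initial peak $\overline{P}_i^m$ itself depends on $t_{\mathrm{on},i}^m$, so the sign of $D_m$ must be established self-consistently. The congruence identity for the linearized flow together with the contractiveness of the monodromy is what resolves this, and some care is needed to obtain strict (rather than merely semi-) definiteness, which follows from invertibility of each $\Psi$ and from the $n=0$ term $\Gamma S\prec0$ dominating the remaining negative-semidefinite terms of the series.
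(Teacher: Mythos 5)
Your proposal is correct and follows essentially the same route as the paper's proof: both exploit the congruence structure of the linearized periodic Riccati flow, identify the source term as $\dot{\bar{\Omega}}_i((\underline{\tau}_i^m)^-)\prec 0$ via Lemma \ref{lemma:pd_nd_time_derivative}, close the loop with the periodicity condition, and use stability of the one-period monodromy to solve the resulting discrete Lyapunov/fixed-point equation for a negative definite sensitivity, then propagate by congruence to all $k$. Your Neumann-series inversion of $(I-\Pi)$ is just the explicit series solution of the Lyapunov equation the paper invokes, so the two arguments coincide in substance.
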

\begin{proof}
First, we define $\Phi_i^m$ as
\begin{equation}
    \Phi_i^m = \exp\left(\int_{0}^{t_{\text{on},i}^m}(A_i-
   \bar{\Omega}_i(t+\overline{\tau}_i^m)G_i)dt\right).
\end{equation}
To understand the intuition behind $\Phi_i^m$, recall Prop. 3 in \cite{pinto2020periodicfull} which states that the derivative $\frac{\partial {\bar{\Omega}}}{\partial t_{\text{on},i}^m}(t)$, for $t\in(\overline{\tau}_i^m,\underline{\tau}_i^m)$ is given by:
\begin{equation}
    \frac{\partial {\bar{\Omega}}}{\partial t_{\text{on},i}^m}(t) = \Sigma^T(t)\frac{\partial {\bar{\Omega}}}{\partial t_{\text{on},i}^m}(\overline{\tau}_i^m)\Sigma(t),
\end{equation}
where $\Sigma$ is the solution of the following ODE:
\begin{equation}
    \dot{\Sigma}(t) - (A-\bar{\Omega}_i(t)G_i)\Sigma(t) = 0,\ \Sigma(\underline{\tau}_i^m)=I.
\end{equation}
Therefore, $\Phi_i^m=\Sigma(\overline{\tau}_i^m)$ can be interpreted as the transition matrix between times $\overline{\tau}_i^m$ and  $\underline{\tau}_i^m$ of the homogeneous version of the ODE for which the derivative is a solution. For more details, we refer the reader to \cite{pinto2020periodicfull}.

Now, by computing the derivative of $\bar{\Omega}_i(t)$ as $t\rightarrow(\overline{\tau}_i^k)^-$ (recall that at time $t=\overline{\tau}_i^k$ the derivative is discontinuous, since at this instant the target switches from not being observed to being observed), we obtain the following recursive expression:
\begin{equation}
\frac{\partial {\bar{\Omega}}}{\partial t_{\text{on},i}^m}(\underline{\tau}_i^m) = \frac{\partial \underline{P}_i^m}{\partial t_{\text{on},i}^m} = (\Phi_i^m)^T\frac{\partial \overline{P}_i^{m}}{\partial t_{\text{on},i}^m}\Phi_i^m + \dot{\bar{\Omega}}_i((\underline{\tau}_i^{m})^-).
\end{equation}

Furthermore, defining $\Psi_i^m=\exp(A_i t_{\text{off},i}^m)$ and using again the result in Prop. 3 of \cite{pinto2020periodicfull}, we get:
\begin{equation}
    \label{eq:propagation_lower_upper_derivative}
    \frac{\partial \overline{P}_i^{m+1}}{\partial t_{on,i}^m} = (\Psi_i^m)^T\frac{\partial \underline{P}_i^{m}}{\partial t_{on,i}^m}\Psi_i^m.
\end{equation}
 
Now, repeating the same steps and propagating the previous expression to the $k$-th visit, $m\leq k \leq m+N_i-1$, we get the recursion:
\begin{multline}
    \frac{\partial \overline{P}_i^{k}}{\partial t_{\text{on},i}^m} =\\ (\Lambda_i^{k,m})^T\left(\frac{\partial \overline{P}_i^{m-1}}{\partial t_{\text{on},i}^m}+(\Phi_i^m)^{-T}\dot{\bar{\Omega}}_i((\underline{\tau}_i^{m})^-)(\Phi_i^m)^{-1}\right)\Lambda_i^{k,m},
\end{multline}
where $\Lambda_i^{k,m}=\prod_{\alpha=m}^{k}\Psi_i^\alpha\Phi_i^\alpha$. In particular, for $k=m+N_i-1$, due to periodicity, we have:

\begin{multline}
    \frac{\partial \overline{P}_i^{m}}{\partial t_{\text{on},i}^{m}} =  
    (\Lambda_i^{m+N_i-1,m})^T\left(\frac{\partial \overline{P}_i^{m}}{\partial t_{\text{on},i}^m}\right.\\\left.+(\Phi_i^m)^{-T}\dot{\bar{\Omega}}_i((\underline{\tau}_i^{m})^-)(\Phi_i^m)^{-1}\right)\Lambda_i^{m+N_i-1,m},
\end{multline}
which is a Lyapunov equation. Note that $\Lambda_i^{m+N_i-1,m}$ is stable, as discussed in Prop. 3 of \cite{pinto2020periodicfull}. Therefore all of its eigenvalues have modulus lower than one. Also, since $\Lambda_i^{m+N_i-1,m}$ is a product of matrix exponentials, its null space is trivial. This, along with the fact that Lemma \ref{lemma:pd_nd_time_derivative} tells us that $\dot{\bar{\Omega}}_i((\underline{\tau}_i^{m})^-) \prec 0$, implies that the Lyapunov equation has a unique negative definite solution and therefore $\frac{\partial \overline{P}_i^{m-1}}{\partial t_{\text{on},i}^{m}}\prec 0$.

Moreover, note that, for $m \leq k < m+N_i-1$,
\begin{equation}
    \frac{\partial \overline{P}_i^{m-1}}{\partial t_{on,i}^m} = (\Lambda_i^{m+N_i-1,k})^T\frac{\partial \overline{P}_i^{k}}{\partial t_{on,i}^m}\Lambda_i^{m+N_i-1,k},
\end{equation}
which leads us to conclude that, $\forall k$, $\frac{\partial \overline{P}_i^{k}}{\partial t_{on,i}^m}\prec 0.$
\end{proof}

\begin{proposition}
    \label{prop:derivative_toff}
    $\frac{\partial \overline{P}_i^k}{\partial t_{\text{off},i}^m} \succ 0$.    
\end{proposition}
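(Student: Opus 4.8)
The plan is to mirror the argument of Proposition~\ref{prop:detivative_ton}, exploiting the symmetry between the observation and waiting phases; the only structural change is that $t_{\text{off},i}^m$ enters through the \emph{off} dynamics (the Riccati equation~\eqref{eq:dynamics_omega} with $\eta_i=0$) rather than the \emph{on} dynamics, which will reverse the sign of the jump term created by the discontinuity. During the $m$-th waiting interval the covariance obeys $\dot{\bar{\Omega}}_i = A_i\bar{\Omega}_i + \bar{\Omega}_i A_i' + Q_i$, whose homogeneous sensitivity ODE is $\dot{S} = A_i S + S A_i'$; by Prop.~3 of~\cite{pinto2020periodicfull} the sensitivity therefore propagates across this interval by congruence with $\Psi_i^m=\exp(A_i t_{\text{off},i}^m)$, exactly as $\Phi_i^m$ governs the observation interval. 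Since perturbing $t_{\text{off},i}^m$ leaves the duration of every observation interval unchanged, it produces no jump during any on-phase, so that $\frac{\partial \underline{P}_i^m}{\partial t_{\text{off},i}^m} = (\Phi_i^m)^T\frac{\partial \overline{P}_i^m}{\partial t_{\text{off},i}^m}\Phi_i^m$.

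Next I would compute the direct effect of lengthening the waiting interval. Because $\overline{P}_i^{m+1}$ is obtained by flowing $\underline{P}_i^m$ forward under the off dynamics for a time $t_{\text{off},i}^m$, differentiating this flow with respect to its duration contributes the Riccati vector field evaluated at the end of the interval, namely $\dot{\bar{\Omega}}_i((\overline{\tau}_i^{m+1})^-)$. This gives the recursion
\begin{equation*}
\frac{\partial \overline{P}_i^{m+1}}{\partial t_{\text{off},i}^m} = (\Psi_i^m)^T\frac{\partial \underline{P}_i^m}{\partial t_{\text{off},i}^m}\Psi_i^m + \dot{\bar{\Omega}}_i((\overline{\tau}_i^{m+1})^-).
\end{equation*}
The crucial observation is that, because $\eta_i=0$ on this interval, Lemma~\ref{lemma:pd_nd_time_derivative} yields $\dot{\bar{\Omega}}_i((\overline{\tau}_i^{m+1})^-)\succ 0$, in contrast with the negative definite jump term that appeared in Proposition~\ref{prop:detivative_ton}.

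I would then propagate this recursion through the remaining on- and off-phases of one full cycle using the same products $\Lambda_i^{k,m}=\prod_{\alpha=m}^{k}\Psi_i^\alpha\Phi_i^\alpha$ and close the loop by periodicity. This produces a discrete Lyapunov equation for $\frac{\partial \overline{P}_i^m}{\partial t_{\text{off},i}^m}$ whose forcing term is the positive definite quantity above, conjugated by the appropriate invertible factors. Since $\Lambda_i^{m+N_i-1,m}$ is stable and has trivial null space (Prop.~3 of~\cite{pinto2020periodicfull}), the Lyapunov equation admits a unique solution whose sign matches that of the forcing term, giving $\frac{\partial \overline{P}_i^m}{\partial t_{\text{off},i}^m}\succ 0$; congruence propagation to the remaining peaks then yields $\frac{\partial \overline{P}_i^k}{\partial t_{\text{off},i}^m}\succ 0$ for all $k$.

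I expect the main obstacle to be bookkeeping rather than concept: one must verify that $t_{\text{off},i}^m$ leaves every observation-phase length unchanged, so that no jump term enters there, while inserting exactly one positive jump at the end of the $m$-th waiting interval, and then confirm that the periodic closure reproduces the identical stable Lyapunov structure of Proposition~\ref{prop:detivative_ton} with only the sign of the forcing flipped from negative to positive definite.
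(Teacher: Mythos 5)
Your proposal is correct and follows essentially the same route as the paper: propagate the sensitivity by congruence with $\Phi_i^m$ and $\Psi_i^m$, pick up a single jump term $\dot{\bar{\Omega}}_i$ at the boundary of the perturbed waiting interval (positive definite by Lemma~\ref{lemma:pd_nd_time_derivative} since $\eta_i=0$ there), close the recursion by periodicity into a stable discrete Lyapunov equation with positive definite forcing, and push the sign to the other peaks by congruence. Your write-up is in fact more explicit than the paper's deliberately brief proof about where the jump term enters and why its sign flips relative to Proposition~\ref{prop:detivative_ton}.
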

\begin{proof}
The proof is very similar to Prop \ref{prop:detivative_ton}. Therefore, only a brief will be given. Note that 
\begin{equation}
 \frac{\partial \overline{P}_i^m}{\partial t_{\text{off},i}^m} = (\Lambda_i^{m,m-1})^T\frac{\partial \overline{P}_i^{m-1}}{\partial t_{\text{off},i}^m}\Lambda_i^{m,m-1} + \dot{\bar{\Omega}}_i((\overline{\tau}_i^{m})^-).
\end{equation}
Using a similar recursion as in the previous proposition, we get
\begin{multline}
 \frac{\partial \overline{P}_i^{m-1}}{\partial t_{\text{off},i}^m} = (\Lambda_i^{m+N_i-2,m-1})^T\frac{\partial \overline{P}_i^{m-1}}{\partial t_{\text{off},i}^m}\Lambda_i^{m+N_i-2,m-1} \\+ (\Lambda_i^{m+N_i-2,m})^T\dot{\bar{\Omega}}_i((\overline{\tau}_i^{m})^-)\Lambda_i^{m+N_i-2,m}.
\end{multline}
Since $\dot{\bar{\Omega}}_i((\overline{\tau}_i^m)^-) \succ 0$, then $\frac{\partial \underline{P}_i^{m-1}}{\partial t_{\text{off},i}^m} \succ 0$ and consequently $\frac{\partial \overline{P}_i^{k}}{\partial t_{\text{off},i}^m} \succ 0$.
\end{proof}

Now that we have established some properties of the variation of the peaks of the steady state covariance matrix, we can explore the main result of this section. The next proposition can be interpreted analogously to resource allocation problems, where different targets are competing for the same resource $t_{\text{on},i}^k$. Therefore, an equilibrium (in the minimax sense) is reached when all the targets have the same utility.
However, unlike typical resource allocation problems, here the total resource $\sum_{i=1}^M \sum_{k=1}^{N_i}t_{\text{on},i}^k$ is not fixed. The reason why the total resource does not go to infinity is that increasing $t_{\text{on},i}^k$ to one target has an adverse effect to all other targets.

\begin{proposition}
    \label{prop:same_peak}
    Assuming a fixed sequence of visiting targets $\mathcal{Y}$, in an optimal allocation of visiting times $\mathcal{T}$ according to the cost \eqref{eq:cost}, the following must hold:
    $$\limsup_{t\rightarrow\infty}g_i(\norm{\Omega_i(t)})=\limsup_{t\rightarrow\infty}g_j(\norm{\Omega_j(t)}),$$
    with $i,\ j\in\{1,...,M\}.$
\end{proposition}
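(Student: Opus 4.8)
The plan is to argue by contradiction with a perturbation (exchange) argument built on Props. \ref{prop:detivative_ton} and \ref{prop:derivative_toff}. First I would abbreviate $J_i := \limsup_{t\rightarrow\infty} g_i(\norm{\Omega_i(t)})$ and invoke Lemma \ref{lemma:max_on_peak} to write $J_i = \max_{1\le k\le N_i} g_i(\norm{\overline{P}_i^k})$, so that the global cost is $J = \max_i J_i$ and the goal becomes showing that at an optimum all the $J_i$ are equal. I would also record that, because $A_i$ is unstable (Assumption \ref{assump:instablity}), any finite-cost allocation must satisfy $t_{\text{on},i}^k > 0$ for every visit, which is what makes the perturbations below feasible.

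Suppose $\mathcal{T}$ is optimal but the $J_i$ are not all equal. Let $J^\star = \max_i J_i$ and $S = \{i : J_i = J^\star\}$; by assumption $S$ is a proper subset, so there is a slack target $j\notin S$ with $J_j < J^\star$. The key move is to decrease a single on-time $t_{\text{on},j}^m$ of this slack target by a small $\delta>0$ and track two competing effects. For target $j$ itself, Prop. \ref{prop:detivative_ton} gives $\partial\overline{P}_j^k/\partial t_{\text{on},j}^m \prec 0$, so the perturbation raises $j$'s peaks and hence $J_j$; but $J_j$ varies continuously with the dwell times, so for $\delta$ small enough $J_j$ stays strictly below $J^\star$. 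For every other target $i\ne j$, the crucial structural observation is that, by \eqref{eq:def_toff}, the scalar $t_{\text{on},j}^m$ appears as a single additive term in exactly one off-interval of $i$—the one that temporally contains the $m$-th visit to $j$—so that $\mathrm{d}t_{\text{off},i}^{k(i)}/\mathrm{d}t_{\text{on},j}^m = 1$ for that index $k(i)$ while no other on- or off-time of $i$ is affected. Chaining this with Prop. \ref{prop:derivative_toff} ($\partial\overline{P}_i^k/\partial t_{\text{off},i}^{k(i)}\succ 0$ for all $k$) yields $\mathrm{d}\overline{P}_i^k/\mathrm{d}t_{\text{on},j}^m \succ 0$, i.e. decreasing $t_{\text{on},j}^m$ strictly lowers every peak of every target $i\ne j$ in the positive-definite order.

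To close the argument, I would use that $\norm{\cdot}$ is monotone on the PSD cone (as for the trace) and $g_i$ is strictly increasing, so the strict PSD decrease of the peaks makes $J_i$ strictly decrease for every $i\ne j$; in particular every target in $S$ drops strictly below $J^\star$, while the remaining slack targets and the perturbed target $j$ also stay below $J^\star$. Hence the perturbed allocation satisfies $\max_i J_i < J^\star$, contradicting optimality, and therefore at an optimum all the $J_i$ must coincide.

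I expect the part requiring the most care to be the coupling bookkeeping: one must verify from \eqref{eq:def_toff} that a change in the single scalar $t_{\text{on},j}^m$ feeds into \emph{every} other target simultaneously and in the favorable direction, entering precisely one off-interval of each $i\ne j$ so that Prop. \ref{prop:derivative_toff} applies with no competing sign. The remaining points—strict monotonicity of the chosen norm on the PSD cone, continuity of the peaks in the dwell times (which follows from the differentiability underlying Props. \ref{prop:detivative_ton}--\ref{prop:derivative_toff}), and positivity of $t_{\text{on},j}^m$ at a finite-cost optimum—are routine once this coupling is pinned down.
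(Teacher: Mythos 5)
Your proposal is correct and follows essentially the same route as the paper's own proof: a contradiction/perturbation argument that shaves observation time off a non-maximizing target, uses Prop.~\ref{prop:detivative_ton} and continuity to keep that target's peak below the current maximum, and uses Prop.~\ref{prop:derivative_toff} together with Eq.~\eqref{eq:def_toff} to show every other target's peaks strictly drop, so the max strictly decreases. Your bookkeeping of which off-interval of each other target absorbs the change, and your explicit definition of the maximizer set $S$, are slightly more careful than the paper's write-up, but the underlying idea is identical.
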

\begin{proof}
    We prove by contradiction, showing that if the property given in the proposition does not hold, then there is a way to re-balance the observation times that is guaranteed to improve the performance. Suppose that for some target $i$
    \begin{equation}
        \label{eq:inequal_peak_hypothesis}
        g_i\left(\norm{\overline{P}_i^{\max} (t_{\text{on},i}^{1:N_i},t_{\text{off},i}^{1:N_i})}\right)<g_j\left(\norm{\overline{P}_j^{\max} (t_{\text{on},j}^{1:N_j},t_{\text{off},j}^{1:N_j})}\right),
    \end{equation}
    where the upper index $\max$ indicates that among all the peaks $P_i^k$, we pick the value of $k$ that yields the highest $g_i(\norm{P_i^k})$. We now propose to decrease the amount of time target $i$ is observed by $\epsilon$. Looking back to Eq. \eqref{eq:def_toff}, this implies that the waiting time between observations for all the other targets will also decrease. This updated policy generates a new set of observation times for target $i$, ($\tilde{t}_{\text{on},i}^{1:N_i}$), and updated waiting times between visits for all the other targets, ($\tilde{t}_{\text{off},j}^{1:N_j}$), while maintaining ${t}_{\text{off},i}^{1:N_i}$ and ${t}_{\text{on},j}^{1:N_j}$ constant.  There exists $\epsilon>0$ such that $\tilde{t}_{\text{on},i}^ k={t}_{\text{on},i}^k-\epsilon$ for some $k\in\{1,..,N_i\}$ and $\tilde{t}_{\text{off},j}^m<{t}_{\text{off},j}^m$ for some $m\in \{1,...,N_j\}$. Using Props. \ref{prop:detivative_ton} and \ref{prop:derivative_toff} we get that
    \begin{align}
        \overline{P}_i^{\max} (\tilde{t}_{\text{on},i}^{1:N_i},t_{\text{off},i}^{1:N_i}) &\succ  \overline{P}_i^{\max} ({t}_{\text{on},i}^{1:N_i},t_{\text{off},i}^{1:N_i}),\\
        \overline{P}_j^{\max} ({t}_{\text{on},j}^{1:N_j},\tilde{t}_{\text{off},j}^{1:N_j}) &\prec  \overline{P}_j^{\max} ({t}_{\text{on},j}^{1:N_j},t_{\text{off},j}^{1:N_j}).
    \end{align}
     Using the fact that both the norm and the derivative are continuous and strictly increasing, we can always pick an $\epsilon$ small enough such that the new peak of target $i$ is lower or equal to the new peak of $j$, i.e.,
    \begin{align}
        g_j\left(\norm{\overline{P}_j^{\max} ({t}_{\text{on},j}^{1:N_j},\tilde{t}_{\text{off},j}^{1:N_j})}\right) &<  g_j\left(\norm{\overline{P}_j^{\max} ({t}_{\text{on},j}^{1:N_j},t_{\text{off},j}^{1:N_j})}\right),\\
        g_i\left(\norm{\overline{P}_i^{\max} (\tilde{t}_{\text{on},i}^{1:N_i},t_{\text{off},i}^{1:N_i})}\right)&\leq g_j\left(\norm{\overline{P}_j^{\max} ({t}_{\text{on},j}^{1:N_j},\tilde{t}_{\text{off},j}^{1:N_j})}\right).
    \end{align}
    
    Since under the updated policy all the peaks $\overline{P}_j^m$, $1\leq m \leq N_j$, $1\leq j\leq M$, are lower for all the targets except target $i$, we recall Lemma \ref{lemma:max_on_peak} and conclude that this updated policy has a lower cost than the previous one. Therefore, the previous policy cannot be optimal, which proves the proposition.
    
    \end{proof}

    This proposition gives a necessary condition for optimally allocating observation times. Moreover, its constructive proof also gives insight on how to locally optimize the dwelling times for a fixed sequence. However, in general this property is not sufficient for optimally determining how much time the agent should spend at each target. In the remainder of this paper, we will restrict ourselves to a specific set of visiting sequences $\mathcal{Y}$ where the optimality property in Prop. \ref{prop:same_peak} can indeed be exploited to optimize the dwelling times at each target. 

    \section{Optimal Monitoring with Single Visit per Cycle}
    \label{sec:optimization_single_visit}
    In this section, we restrict ourselves to investigating visiting sequences $\mathcal{Y}$ for which each target is visited only once in a cycle. By making this assumption, we are able to prove that the optimal one with respect to the cost \eqref{eq:cost} is the Traveling Salesman Problem (TSP) cycle, which was intuitively expected. Additionally, we present a practical algorithm for optimizing for how long each target should be observed. As a side note, in this section we will omit the upper index of $t_{\text{on},i}$ and $t_{\text{off},i}$, since $N_i=1$.
    
    \begin{proposition}
        Among all the visiting sequences where each target is visited only once at each cycle, for any allocation of dwelling times $\mathcal{T}$, the TSP solution is the one that minimizes the cost \eqref{eq:cost}.
    \end{proposition}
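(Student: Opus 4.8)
The plan is to fix the per‑target dwelling times and show that the ordering $\mathcal{Y}$ enters the cost only through the total travel time of the cycle, which the TSP tour minimizes by definition. First I would specialize the waiting time \eqref{eq:def_toff} to the single‑visit case $N_i=1$. There the time between consecutive observations of target $i$ is simply
$$
t_{\text{off},i} = T - t_{\text{on},i} = \sum_{j\neq i} t_{\text{on},j} + D_{\mathcal{Y}},
$$
where $D_{\mathcal{Y}}=\sum_{q} d_{a(q),a(q+1)}$ is the total edge cost of the Hamiltonian cycle induced by $\mathcal{Y}$. Since reordering the visits does not change how long any target is observed, the dwelling times $t_{\text{on},j}$ are held fixed, so the $t_{\text{off},i}$ depend on $\mathcal{Y}$ only through the scalar $D_{\mathcal{Y}}$; passing from a tour $\mathcal{Y}$ to a shorter tour $\mathcal{Y}'$ decreases every $t_{\text{off},i}$ by exactly $D_{\mathcal{Y}}-D_{\mathcal{Y}'}$ simultaneously.

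Next I would upgrade the infinitesimal statement of Prop. \ref{prop:derivative_toff} to a finite‑difference one. Because for a single visit $\overline{P}_i$ is a function of only the two scalars $t_{\text{on},i}$ and $t_{\text{off},i}$, and $\frac{\partial \overline{P}_i}{\partial t_{\text{off},i}} \succ 0$ holds uniformly, integrating this matrix‑valued derivative along the segment from the larger value of $t_{\text{off},i}$ down to the smaller one yields $\overline{P}_i^{\text{TSP}} \prec \overline{P}_i^{\mathcal{Y}}$ whenever $D_{\text{TSP}} < D_{\mathcal{Y}}$, with equality of peaks when the two travel costs coincide. Because the single scalar $D_{\mathcal{Y}}$ controls all the $t_{\text{off},i}$ at once, this ordering holds for every target simultaneously.

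Finally I would translate the matrix inequality into the scalar cost. Invoking monotonicity of the chosen norm on the PSD cone together with the strict monotonicity of $g_i$, the relation $\overline{P}_i^{\text{TSP}} \prec \overline{P}_i^{\mathcal{Y}}$ gives $g_i\!\left(\norm{\overline{P}_i^{\text{TSP}}}\right) \le g_i\!\left(\norm{\overline{P}_i^{\mathcal{Y}}}\right)$ for each $i$. By Lemma \ref{lemma:max_on_peak}, for single visits these peak values are exactly $\limsup_{t\rightarrow\infty} g_i(\norm{\Omega_i(t)})$, so taking the maximum over $i$ gives $J(\mathcal{Y}_{\text{TSP}},\mathcal{T}) \le J(\mathcal{Y},\mathcal{T})$ for every single‑visit $\mathcal{Y}$ and every fixed $\mathcal{T}$, which is the claim.

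The main obstacle I anticipate is the second step: Prop. \ref{prop:derivative_toff} is stated only as a pointwise derivative, so some care is needed to conclude a genuine ordering of the peak matrices over a finite change in $t_{\text{off},i}$. This relies on $\overline{P}_i$ being a smooth function of $t_{\text{off},i}$ along the whole relevant interval and on the derivative remaining positive definite throughout, not merely at one point. A secondary item to pin down is precisely which monotonicity property of $\norm{\cdot}$ on the PSD cone is used (for instance $A \prec B \Rightarrow \norm{A} < \norm{B}$, which holds for the trace); this should be stated explicitly or folded into the standing assumptions on $\norm{\cdot}$ and $g_i$.
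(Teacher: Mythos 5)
Your proof is correct and follows essentially the same route as the paper's (much terser) argument: the TSP tour minimizes total travel time, hence minimizes every $t_{\text{off},i}$ for fixed dwelling times, and Proposition \ref{prop:derivative_toff} together with Lemma \ref{lemma:max_on_peak} then yields the cost ordering. Your additional care in upgrading the pointwise derivative to a finite-difference comparison and in flagging the monotonicity of $\norm{\cdot}$ on the PSD cone (which the paper uses implicitly) only makes the argument more complete.
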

    \begin{proof}
        We recall the $t_{\text{off},i}$ is completely defined by the amount of time the agent dwells in other targets plus the time it spends traveling between targets. If every target is visited only once, the TSP cycle is the one with the least amount of traveling time. Consequently, for fixed $t_{\text{on},i}$, the values of $t_{\text{off},i},\   \forall i$,  are minimized. This fact along with Proposition \ref{prop:derivative_toff} implies that for any set of observation times $\mathcal{T}$, the TSP tour will yield the lowest cost.
    \end{proof}
    
    \begin{remark}
        The problem of computing the TSP tour is NP-hard. However, efficient sub-optimal solutions are available (see e.g. \cite{Kirk2020}). The approach we will discuss for optimizing the dwelling times does not rely on having the optimal TSP tour, and indeed can handle any tour as long as every target  is visited exactly once. Therefore, if finding the TSP tour is computationally infeasible, we still can use a suboptimal TSP solution. Moreover, the results in this section can also be applied to connected graphs where it is not possible to visit every target if each node is constrained to be visited only once. {For instance, if a node is only connected to one other node in the network, the node that it is connected to will need to be visited at least twice. In this case, we can still use the results given in this section if we impose the constraint that all the visitation times to a given target, except one, are zero.}
    \end{remark}
    
    Having discussed the optimal visiting sequence in this particular subset of admissible $\mathcal{Y}$, we now approach the question of how to adjust the visitation time in a computationally simple way. The main idea behind our approach is to exploit the property that the upper peaks must coincide in an optimal solution. We develop a procedure to balance the observation times at each target, such that the peak uncertainties will coincide after convergence. {This iterative scheme, which updates the all the target
   's observation time $t_{\text{on},i}$ at each iteration, is given by:}
    \begin{equation}
        \label{eq:consensus_update_discrete}
        t_{on,i}[k+1] = t_{on,i}[k] +k_p\log\left(\frac{g_i\left(\norm{\overline{P}_i}\right)}{g_{\text{avg}}}\right),
    \end{equation}
    {where $g_{\text{avg}}=\left(\prod_{j=1}^Mg_j\left(\norm{\overline{P}_j}\right)\right)^{\frac{1}{M}}$ and $k_p$ is a small positive constant. It can be interpreted as a ``consensus" algorithm on the peak uncertainties $\overline{P}_i$, and thus its structure is very similar to geometric mean consensus algorithms \cite{FB-LNS}. The expression \eqref{eq:consensus_update_discrete} does not require the computation of gradients, which makes it computationally much less demanding than gradient-based approaches.}
    \begin{remark}
    At each iteration of \eqref{eq:consensus_update_discrete}, it is necessary to compute $\overline{P}_i$ and $g_{\text{avg}}$. For computing $\overline{P}_i$, we use the algorithm described in \cite{chu2004structure}, which converges quadratically and is numerically stable. Moreover, we note that $g_{\text{avg}}$ can be computed distributively if $\overline{P}_i$ is computed locally by each target and we assume that targets communicate among themselves using the same network connection structure that the agent uses to move (graph $\mathcal{G}$). Since this network is connected, consensus protocols can be employed at each iteration {in order to have each target compute} $g_{\text{avg}}$ separately.
    \end{remark}
    
    {In order to simplify the convergence analysis of this update law, we abstract it with the following differential equation:}
    \begin{equation}
        \label{eq:update_law_single_target}
        \frac{d}{dr}{t}_{\text{on},i} = k_p\log\left(\frac{g_i\left(\norm{\overline{P}_i}\right)}{g_{\text{avg}}}\right).
    \end{equation}
    {Note that this version of the update law considers continuous parameter variation, i.e., the auxiliary variable $r$ that should be understood as the continuous time equivalent of ``iteration index" and does not carry any ``time'' interpretation.}
    Moreover we assume that, at $r=0$, all the targets were observed for a strictly positive amount of time.
    We now prove its convergence:
    \begin{proposition}
        \label{prop:assymptotic_stability}
        Under the update law \eqref{eq:update_law_single_target}, the function $\max_{i,j}|g_i(\norm{P_i})-g_j(\norm{P_j})|$ is asymptotically stable.
    \end{proposition}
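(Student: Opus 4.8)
The plan is to uncover a conservation law hidden in the geometric-mean structure of the update and then run a standard disagreement-Lyapunov argument. Writing $s_i := g_i(\norm{\overline{P}_i})$ for brevity, the first step is to observe that the total observation budget $S := \sum_{i=1}^M t_{\text{on},i}$ is invariant under \eqref{eq:update_law_single_target}. Summing the right-hand side over $i$ gives $k_p\sum_i \log\!\big(s_i/g_{\text{avg}}\big) = k_p\big(\sum_i \log s_i - M\log g_{\text{avg}}\big) = 0$, since $g_{\text{avg}}^M = \prod_j s_j$ by definition. Hence $\dot{S}\equiv 0$: the geometric mean is precisely the choice that makes the log-updates sum to zero.

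The second step exploits this conservation to decouple the dynamics. Because each target is visited once per cycle, \eqref{eq:def_toff} collapses to $t_{\text{off},i} = (S - t_{\text{on},i}) + D$, where $D$ is the constant total travel time of the tour. With $S$ fixed along trajectories, $t_{\text{off},i}$ is an affine function of $t_{\text{on},i}$ alone, so each $s_i$ becomes a function of the single scalar $t_{\text{on},i}$. Differentiating along the constraint and combining Props. \ref{prop:detivative_ton} and \ref{prop:derivative_toff} (using $g_i'>0$ and the monotonicity of $\norm{\cdot}$ in the positive-semidefinite order, as already invoked in the proof of Prop. \ref{prop:same_peak}) yields $\frac{d s_i}{d t_{\text{on},i}}\big|_S = \frac{\partial s_i}{\partial t_{\text{on},i}} - \frac{\partial s_i}{\partial t_{\text{off},i}} < 0$. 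In words: holding the budget fixed, observing target $i$ longer strictly lowers its peak cost.

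With these two facts the consensus argument is short. Since $g_{\text{avg}}$ is the geometric mean of the $s_j$, it obeys $\min_j s_j \le g_{\text{avg}} \le \max_j s_j$, with both inequalities strict unless all $s_j$ coincide. Taking $i^\star \in \argmax_i s_i$ gives $\dot{t}_{\text{on},i^\star} = k_p\log\!\big(s_{i^\star}/g_{\text{avg}}\big) \ge 0$, and because $s_{i^\star}$ is strictly decreasing in $t_{\text{on},i^\star}$ we obtain $\dot{s}_{i^\star}\le 0$, strictly negative off consensus; the symmetric computation at a minimizer gives $\dot{s}_{\min}\ge 0$. Handling the non-smoothness of the extrema with upper Dini derivatives, this shows $V := \max_i s_i - \min_i s_i$, which equals $\max_{i,j}|s_i - s_j|$, is non-increasing and strictly decreasing whenever $V>0$. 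I would then close with a LaSalle argument: the constraint $\sum_i t_{\text{on},i}=S$ with $t_{\text{on},i}\ge 0$ keeps trajectories bounded, so they converge to the largest invariant subset of $\{V=0\}$, namely the consensus set, giving asymptotic stability of $\max_{i,j}|s_i-s_j|$.

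The conservation law is the crux that makes the scheme behave like a genuine consensus (without it, cross-coupling through $t_{\text{off},i}$ could let the maximum grow). The main remaining obstacles are two technical points: first, ensuring that $g_i\circ\norm{\cdot}$ is \emph{strictly} monotone along the definite directions of Props. \ref{prop:detivative_ton}--\ref{prop:derivative_toff} so that $\frac{ds_i}{dt_{\text{on},i}}<0$ holds strictly rather than weakly; and second, guaranteeing that trajectories remain in the interior $t_{\text{on},i}>0$ where those propositions apply. For the latter I would argue that the boundary is repelling: under Assumption \ref{assump:instablity} an unstable target observed for a vanishing time has unbounded peak, so $s_i\to\infty$ as $t_{\text{on},i}\to 0^+$; such a target is then the strict maximizer, and by the update its $t_{\text{on},i}$ is driven upward, keeping the dynamics away from the boundary.
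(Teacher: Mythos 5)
Your proof is correct and follows essentially the same route as the paper: you identify the same conservation law $\sum_i \dot{t}_{\text{on},i}=0$ arising from the geometric mean, use $T$ constant to reduce the dynamics to the combined derivative $\partial s_i/\partial t_{\text{on},i}-\partial s_i/\partial t_{\text{off},i}<0$ via Props.~\ref{prop:detivative_ton} and \ref{prop:derivative_toff}, and conclude that the max decreases and the min increases because the geometric mean is sandwiched between them. Your added care about Dini derivatives at non-smooth extrema, the LaSalle closing step, and repulsion from the boundary $t_{\text{on},i}=0$ goes beyond what the paper writes out, but it refines rather than changes the argument.
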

    \begin{proof}
        \begin{equation}
            \frac{d \overline{P}_i}{d r} = \frac{\partial \overline{P}_i}{\partial t_{\text{on},i}}\frac{dt_{\text{on},i}}{dr}+\frac{\partial \overline{P}_i}{\partial t_{\text{off},i}}\frac{dt_{\text{off},i}}{dr}.
        \end{equation}
        
        However, note that the sum of observation times among different targets is constant because 
        \begin{align}    
        \sum_{i=1}^N \frac{d{t}_{\text{on},i}}{dr} &=k_p\log \left(\frac{\prod_{j=1}^Mg_j\left(\norm{\overline{P}_j}\right)}{g_{\text{avg}}}\right) \nonumber\\&= k_pM\log \left(\frac{g_{\text{avg}}}{g_{\text{avg}}}\right)=0,
        \end{align}
         therefore under this control law the period $T$ is constant. Since $T=t_{\text{on},i}+t_{\text{off},i}$, we have $\frac{dt_{\text{on},i}}{dr}=-\frac{dt_{\text{off},i}}{dr}$ and
        \begin{equation}
            \frac{d \overline{P}_i}{d r} = \left(\frac{\partial \overline{P}_i}{\partial t_{\text{on},i}}-\frac{\partial \overline{P}_i}{\partial t_{\text{off},i}}\right)k_p\log\left(\frac{g_i\left(\norm{\overline{P}_i}\right)}{g_{\text{avg}}}\right).
        \end{equation}
        Consequently, if $g_i(\norm{\overline{P}_i})>g_{\text{avg}}$, $\frac{d \overline{P}_i}{d r} \prec 0.$ 
        Conversely, if $g_i(\norm{\overline{P}_i})<g_{\text{avg}}$, then $\frac{d \overline{P}_i}{d r} \succ 0.$ Note that
        \begin{equation}
            \max_{i,j}|g_i(\norm{\overline{P}_i})-g_j(\norm{\overline{P}_j})| = \max_{i}g_i(\norm{\overline{P}_i}) - \min_{j}g_j(\norm{\overline{P}_j}).
        \end{equation}
        Therefore since $g_{\text{avg}}$ is the geometric mean (i.e. its value is lower than the maximum and higher than the minimimum $g_j(\norm{\overline{P}_j})$), we get that
        \begin{equation}
            \frac{d}{dr}\max_{i}g_i(\norm{\overline{P}_i}) <0,\ \frac{d }{dr}\min_{j}g_j(\norm{\overline{P}_j})>0.
        \end{equation}
        Hence
        \begin{equation}
            \frac{d}{dr}\max_{i,j}|g_i(\norm{\overline{P}_i})-g_j(\norm{\overline{P}_j})|<0,
        \end{equation}
        which proves the proposition.
    \end{proof}
    \begin{remark}
        \label{remark:monoticity}
        In the proof of Prop. \ref{prop:assymptotic_stability}, we see that $\frac{d}{dr}\max_ig_i(\norm{\overline{P}_i})<0$. Therefore, this update law always reduces the cost defined in \eqref{eq:cost}. This also implies that if the targets do not have the same peak value $g_i\left(\norm{\overline{P}_i}\right)$, then the cost can be reduced by applying \eqref{eq:update_law_single_target}. 
    \end{remark}
    
    Now we show that the value achieved by update law \eqref{eq:update_law_single_target} is unique, i.e. does not depend on the the initial observation time distribution. 
    \begin{lemma}
        \label{lemma:unique_consensus}
        For a given period $T$ and a fixed visiting sequence $\mathcal{Y}$ where each target is visited once, there is a unique observation time distribution $\mathcal{T}$ such that  $g_i(\norm{\overline{P}_i})=g_j(\norm{\overline{P}_j})$, $\forall \ i,\ j.$
    \end{lemma}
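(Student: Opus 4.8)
The plan is to use the single-visit structure to collapse each target's steady-state peak into a scalar function of a single variable, prove this function is a strictly monotone bijection, and then solve the equalization constraint by a monotone-inverse / intermediate-value argument tied to the fixed period. First I would set up the reduced parametrization. Since each target is visited exactly once per cycle, its inter-visit time equals the period, so $t_{\text{off},i}=T-t_{\text{on},i}$; summing all dwell and travel times around the tour gives $\sum_{i=1}^M t_{\text{on},i}=T-D=:S$, where $D$ is the (fixed) total travel time of the sequence $\mathcal{Y}$. Hence the admissible dwell-time allocations form the simplex $\{t_{\text{on}}\colon t_{\text{on},i}\ge 0,\ \sum_i t_{\text{on},i}=S\}$, and each peak becomes a function of a single scalar, $f_i(t_{\text{on},i}):=g_i(\norm{\overline{P}_i(t_{\text{on},i},\,T-t_{\text{on},i})})$.

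Next I would establish that each $f_i$ is continuous and strictly decreasing on $(0,S]$, with $f_i(t_{\text{on},i})\to\infty$ as $t_{\text{on},i}\to 0^+$. Strict monotonicity follows from the chain rule already used in the proof of Prop.~\ref{prop:assymptotic_stability}: with $t_{\text{off},i}=T-t_{\text{on},i}$ one has $\frac{d\overline{P}_i}{dt_{\text{on},i}}=\frac{\partial\overline{P}_i}{\partial t_{\text{on},i}}-\frac{\partial\overline{P}_i}{\partial t_{\text{off},i}}\prec 0$ by Props.~\ref{prop:detivative_ton} and \ref{prop:derivative_toff}; since $\norm{\cdot}$ is monotone with respect to the Loewner order and $g_i$ is strictly increasing, $f_i$ is strictly decreasing. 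Continuity of $f_i$ comes from the continuous dependence of the unique periodic solution $\overline{\Omega}_i$ (Prop.~\ref{prop:unique_attractive_sol_riccati_eq}) on the timing parameters, while the blow-up as $t_{\text{on},i}\to 0^+$ uses that $A_i$ is unstable (Assumption~\ref{assump:instablity}), so the peak grows without bound as observation vanishes, together with $g_i(x)\to\infty$.

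Then I would equalize by inversion. Each $f_i$ is a continuous strictly decreasing bijection from $(0,S]$ onto $[f_i(S),\infty)$, so it admits a continuous strictly decreasing inverse $h_i:=f_i^{-1}$; a common peak level $c$ forces $t_{\text{on},i}=h_i(c)$. Defining $\Sigma(c):=\sum_{i=1}^M h_i(c)$, which is continuous and strictly decreasing, I would evaluate the endpoints: letting $\bar c:=\max_i f_i(S)$, the maximizing target alone consumes the whole budget $S$ while the others contribute strictly positive amounts, so $\Sigma(\bar c)>S$, whereas $\Sigma(c)\to 0$ as $c\to\infty$. The intermediate value theorem together with strict monotonicity then yields a unique $c^\star\in(\bar c,\infty)$ with $\Sigma(c^\star)=S$, and $t_{\text{on},i}=h_i(c^\star)$ is the unique feasible allocation making all peaks coincide, independent of the initial times.

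I expect the main obstacle to be the two global (rather than local) claims. The boundary limit $f_i\to\infty$ as $t_{\text{on},i}\to 0^+$ must be argued from the instability of $A_i$, since Props.~\ref{prop:detivative_ton}--\ref{prop:derivative_toff} only deliver local monotonicity and not the correct behavior at the edge of the simplex; and the strict inequality $\Sigma(\bar c)>S$ is what places the target value $S$ strictly inside the range of $\Sigma$, guaranteeing $c^\star$ exists and lies where every $h_i$ is well defined. The continuous dependence of $\overline{P}_i$ on the dwell times, which underlies continuity of $f_i$ and hence of $h_i$, should also be justified carefully rather than taken for granted.
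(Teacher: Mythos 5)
Your argument is correct in its core logic but takes a genuinely different, and more ambitious, route than the paper. The paper's proof is a three-line contradiction establishing \emph{uniqueness only}: if two distinct equalized allocations had common peak levels $g_{con}<g_{con}'$, then by the combined monotonicity of Props.~\ref{prop:detivative_ton} and \ref{prop:derivative_toff} (with $t_{\text{off},i}=T-t_{\text{on},i}$) every dwell time would satisfy $t_{\text{on},i}>t_{\text{on},i}'$, contradicting the fixed budget $\sum_i t_{\text{on},i}=\sum_i t_{\text{on},i}'=T-D$. Your reduction to the scalar maps $f_i$ and the strict monotonicity of $\Sigma(c)=\sum_i f_i^{-1}(c)$ contains exactly this uniqueness argument (any equalized allocation must be $t_{\text{on},i}=h_i(c)$ by injectivity of $f_i$, and $\Sigma$ is injective), so that half of your proof matches the paper in substance. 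What you add is an \emph{existence} proof via the intermediate value theorem, which the paper never attempts in this lemma --- existence there is obtained dynamically, as the limit of the update law in Prop.~\ref{prop:assymptotic_stability}. The price of your stronger conclusion is precisely the two facts you flag: continuity of $t_{\text{on},i}\mapsto\overline{P}_i$ and the divergence $f_i(t_{\text{on},i})\to\infty$ as $t_{\text{on},i}\to 0^+$. The first is standard but not proved anywhere in the paper; the second is plausible (for the periodic solution to balance, a vanishing observation window must be compensated by an arbitrarily large peak, since the Riccati decay rate $-\Omega_iG_i\Omega_i$ is what closes the cycle, and Assumption~\ref{assump:instablity} rules out a bounded unobserved limit), but it does require an argument beyond the purely local Props.~\ref{prop:detivative_ton} and \ref{prop:derivative_toff}. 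Neither gap affects the uniqueness claim, which is all the paper's own proof delivers; if you only need what the paper needs, your construction can be trimmed to the monotonicity-plus-budget contradiction. One cosmetic note: your symbol $\Sigma$ collides with the transition matrix $\Sigma(t)$ used in the proof of Prop.~\ref{prop:detivative_ton}.
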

    \begin{proof}
        Suppose there are two different costs $g_{con}$ and $g_{con}'$ such that all targets have the same peak value. Without loss of generality, we assume $g_{con}<g_{con}'$. Since the period $T$ is the same, $t_{\text{on},i}>t_{\text{on},i}'$ and $t_{\text{off},i}<t_{\text{off},i}'$  $\forall i$, due to Props. \ref{prop:detivative_ton} and \ref{prop:derivative_toff}. However, since the period is the same, we must have $\sum_i t_{\text{on},i}=\sum_it_{\text{on},i}'$, which yields a contradiction.
    \end{proof}
    
    Finally, we give a specialization of Prop. \ref{prop:same_peak} to the particular case we discuss in this section. 
    \begin{proposition}
        For a fixed visiting sequence $\mathcal{Y}$ and a given cycle period $T$, the observation times under the update law \eqref{eq:update_law_single_target} converge to the optimal observation time allocation $\mathcal{T}$ with respect to the cost function in \eqref{eq:cost}.
    \end{proposition}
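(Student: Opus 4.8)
The plan is to combine the three facts already established in this section. The update law \eqref{eq:update_law_single_target} drives all targets to a common peak cost (Prop.~\ref{prop:assymptotic_stability}); the allocation at which all peaks coincide is unique for a fixed period $T$ and sequence $\mathcal{Y}$ (Lemma~\ref{lemma:unique_consensus}); and any optimal allocation must have all peaks equal (Prop.~\ref{prop:same_peak}). The goal is to show that this unique equal-peak allocation, which I will call $\mathcal{T}^{eq}$, is exactly the minimizer of \eqref{eq:cost}, and that the flow converges to it.

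First I would note that, since each target is visited once per cycle, Lemma~\ref{lemma:max_on_peak} with $N_i=1$ gives $J=\max_i g_i(\norm{\overline{P}_i})$, so the cost is literally the largest peak. Next I would invoke the proof of Prop.~\ref{prop:assymptotic_stability}, which shows that the flow keeps $T$ constant (since $\sum_i \dot{t}_{\text{on},i}=0$) and asymptotically eliminates the spread $\max_{i,j}|g_i(\norm{\overline{P}_i})-g_j(\norm{\overline{P}_j})|$. Because Lemma~\ref{lemma:unique_consensus} guarantees that the equal-peak configuration is unique for that fixed $T$, and the map from $\mathcal{T}$ to the peaks is continuous (from the well-posedness in Prop.~\ref{prop:unique_attractive_sol_riccati_eq} and standard continuous dependence of solutions on parameters), vanishing of the spread forces $\mathcal{T}(r)\to\mathcal{T}^{eq}$.

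It then remains to identify $\mathcal{T}^{eq}$ with the optimizer. I would do this using Remark~\ref{remark:monoticity}: along any trajectory of \eqref{eq:update_law_single_target} the maximal peak $\max_i g_i(\norm{\overline{P}_i})$ is strictly decreasing until the equal-peak point is reached. Starting the flow from an arbitrary admissible allocation $\mathcal{T}'$ with the same period $T$---necessarily one with every $t_{\text{on},i}>0$, since under Assumption~\ref{assump:instablity} a target left unobserved has unbounded cost and cannot be a candidate minimizer---the trajectory converges to $\mathcal{T}^{eq}$ while the cost only decreases; by continuity of the cost this yields $J(\mathcal{T}^{eq})\le J(\mathcal{T}')$. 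Since $\mathcal{T}'$ was arbitrary, $\mathcal{T}^{eq}$ is a global minimizer, and the update law converges to it.

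The main obstacle I anticipate is precisely this last step: converting the local, monotone cost-decrease into a global optimality claim. The delicate points are (i) ensuring the flow actually reaches $\mathcal{T}^{eq}$ rather than merely driving the spread to zero asymptotically, which needs the continuity of $\mathcal{T}\mapsto\overline{P}_i$ together with the uniqueness in Lemma~\ref{lemma:unique_consensus}, and (ii) justifying that it suffices to restrict attention to interior allocations, which rests on the unbounded-cost argument for any target with $t_{\text{on},i}=0$ under Assumption~\ref{assump:instablity}. If one instead grants existence of a minimizer (e.g.\ by compactness of the fixed-$T$ allocation set together with this boundary behavior), the conclusion becomes immediate: any minimizer has all peaks equal by Prop.~\ref{prop:same_peak} and hence coincides with $\mathcal{T}^{eq}$ by Lemma~\ref{lemma:unique_consensus}, so the flow converges to the optimal allocation.
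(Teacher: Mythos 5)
Your proposal is correct and follows essentially the same route as the paper: combine the monotone cost decrease of the flow (Remark \ref{remark:monoticity}), convergence to the equal-peak configuration (Prop. \ref{prop:assymptotic_stability}), and uniqueness of that configuration for fixed $T$ (Lemma \ref{lemma:unique_consensus}) to conclude the limit is the optimum. Your version is somewhat more careful than the paper's three-sentence argument --- in particular, running the flow from an arbitrary admissible $\mathcal{T}'$ to get $J(\mathcal{T}^{eq})\le J(\mathcal{T}')$ sidesteps having to assume a minimizer exists, and you explicitly handle the boundary case $t_{\text{on},i}=0$ --- but the decomposition and key ingredients are identical.
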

    \begin{proof}
        For any set of observation times with period $T$, update law \eqref{eq:update_law_single_target} always reduces the cost while maintaining $T$ constant if $g_i(\norm{\overline{P}_i})$ is not the same for all the targets. Since there is a unique way such that every target has the same peak (and the update law \eqref{eq:update_law_single_target} ensures convergence to it), then the observation times after convergence of \eqref{eq:update_law_single_target} have to be optimal, otherwise the update law would be able to improve them.
    \end{proof}
    
    
    We have shown a simple way to optimize the observation times, given a fixed period $T$. However, we have not addressed the problem of choosing an adequate $T$. For choosing an optimized period, we use golden ratio search \cite{kiefer1953sequential}, that finds the global optimum in unimodal functions and local optima in a generic single variable function. 
    
    \begin{remark}
    While in all our simulations we have observed that the optimal cost as a function of $T$ is unimodal, proving that this function always has a unique minimum is the subject of ongoing work.
    \end{remark}

    The golden ratio search procedure is given in Algorithm \ref{alg:compute_optimal_visitation_times}. The function $f(T)$ corresponds to running the update law in \eqref{eq:consensus_update_discrete} until convergence. The value $f(T)$ is the value of $g_{\text{avg}}$ at the final iteration.
    \begin{algorithm}
\caption{Optimal Visitation Times Computation}
\label{alg:compute_optimal_visitation_times}
    \begin{algorithmic}[1]
    \State{\bf Input: $T_{\min}$, $T_{\max}$}.
    \State{$T_1 \leftarrow T_{\max} - (T_{\max}-T_{\min})/g$}
    \State{$T_2 \leftarrow T_{\min} + (T_{\max}-T_{\min})/g$}
    \While{$|f(T_2 )-f(T_1 )|<\epsilon$}
    \If{$f(T_2)>f(T_1)$}
    \State{$T_{\max}\leftarrow T_2$}
    \Else{$\ T_{\min}\leftarrow T_1$}
    \EndIf
    \State{$T_1 \leftarrow T_{\max} - (T_{\max}-T_{\min})/g$}
    \State{$T_2 \leftarrow T_{\min} + (T_{\max}-T_{\min})/g$}
    \EndWhile
    \State{\bf Return: $(T_1+T_2)/2$}
\end{algorithmic}
\end{algorithm}
    \begin{figure*}[htp!]
    \centering
    \begin{subfigure}[t]{0.31\textwidth}
        \centering\includegraphics[width=\textwidth]{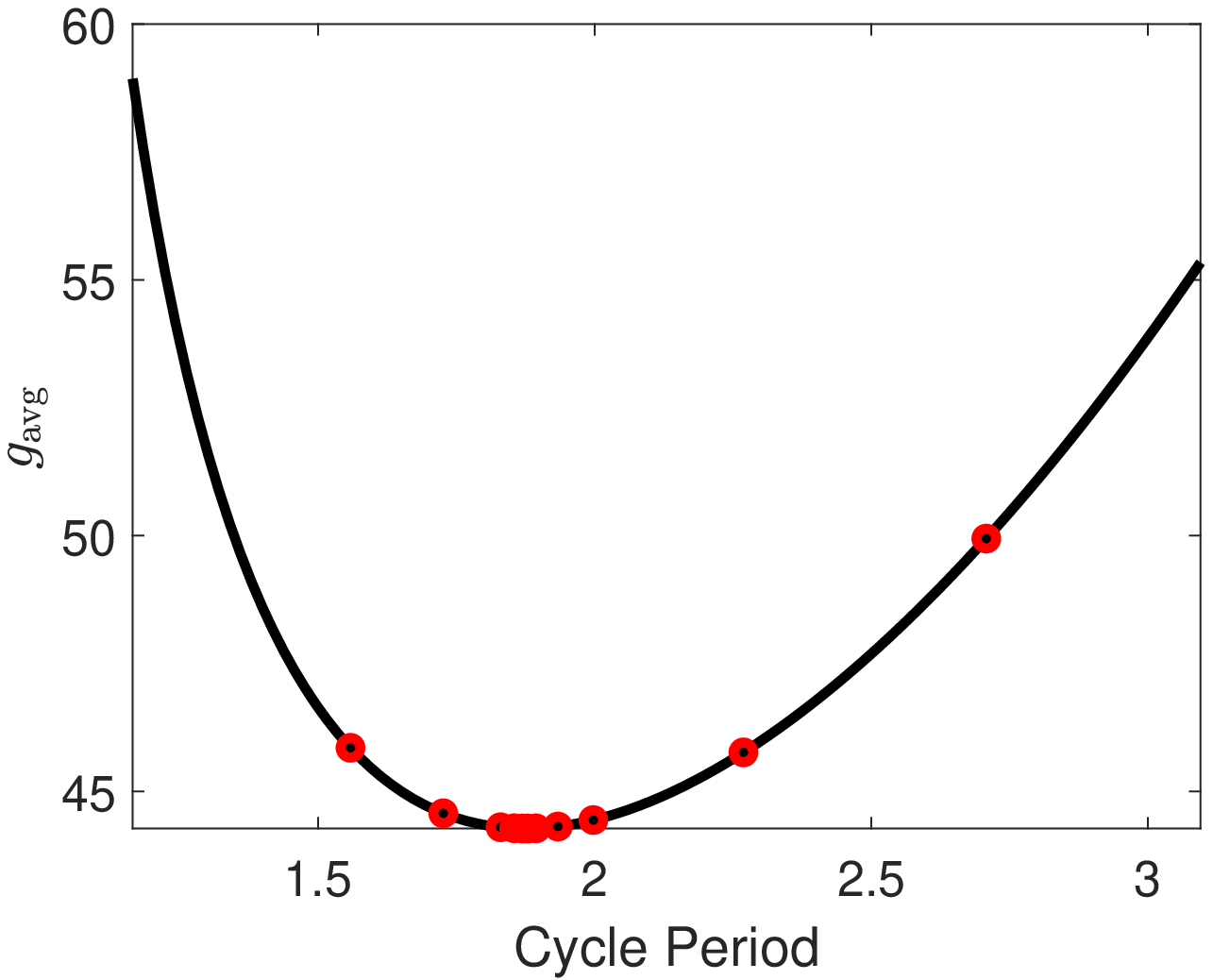}
        \caption{Peak value after balance among targets, as a function of cycle period.}
        \label{fig:peak_period}
    \end{subfigure}\hfill
    \begin{subfigure}[t]{0.32\textwidth}
        \centering\includegraphics[width=\textwidth]{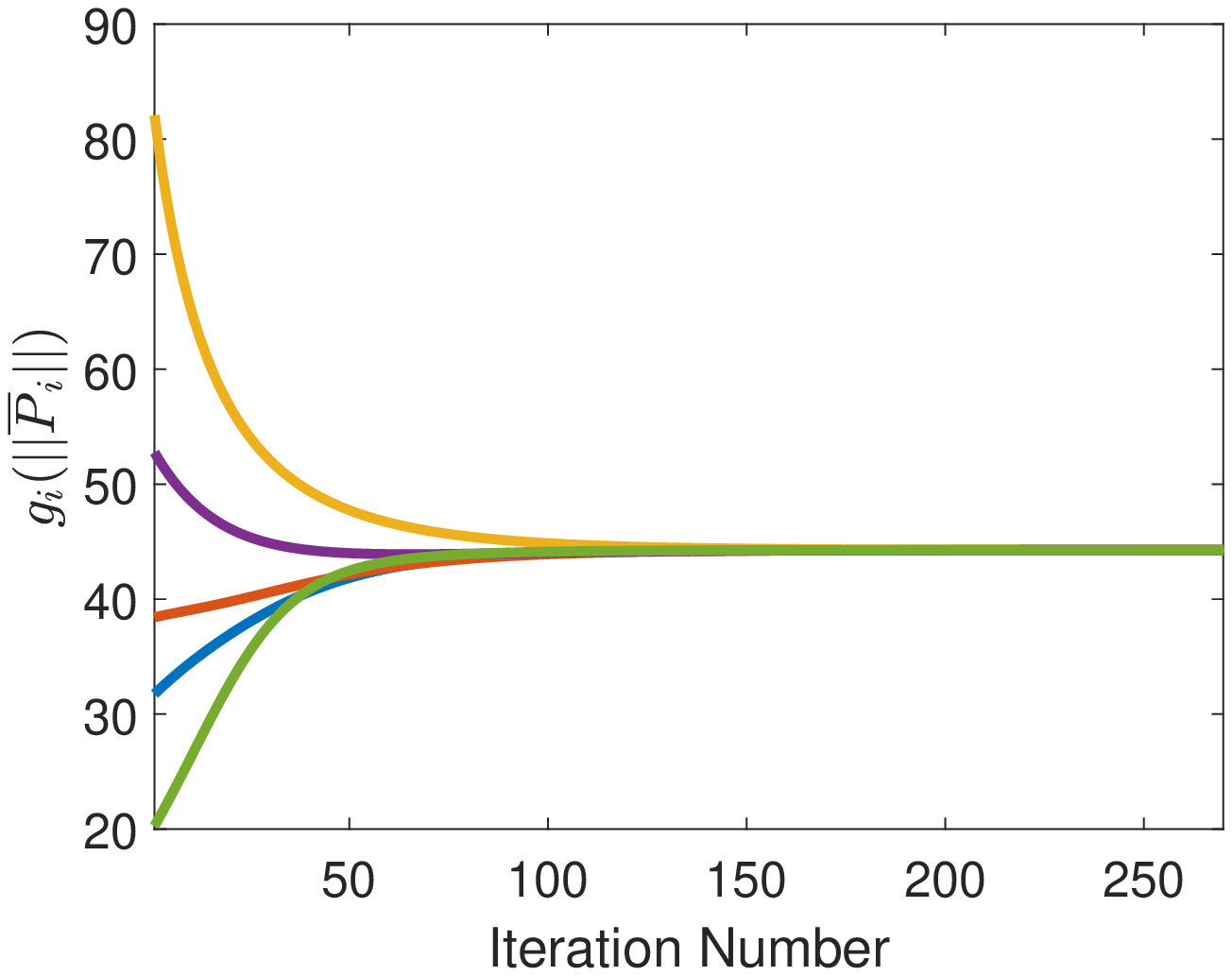}
        \caption{Peak uncertainty at the optimal period.}
        \label{fig:cost_iteration}
    \end{subfigure}\hfill
    \begin{subfigure}[t]{0.32\textwidth}
        \centering\includegraphics[width=\textwidth]{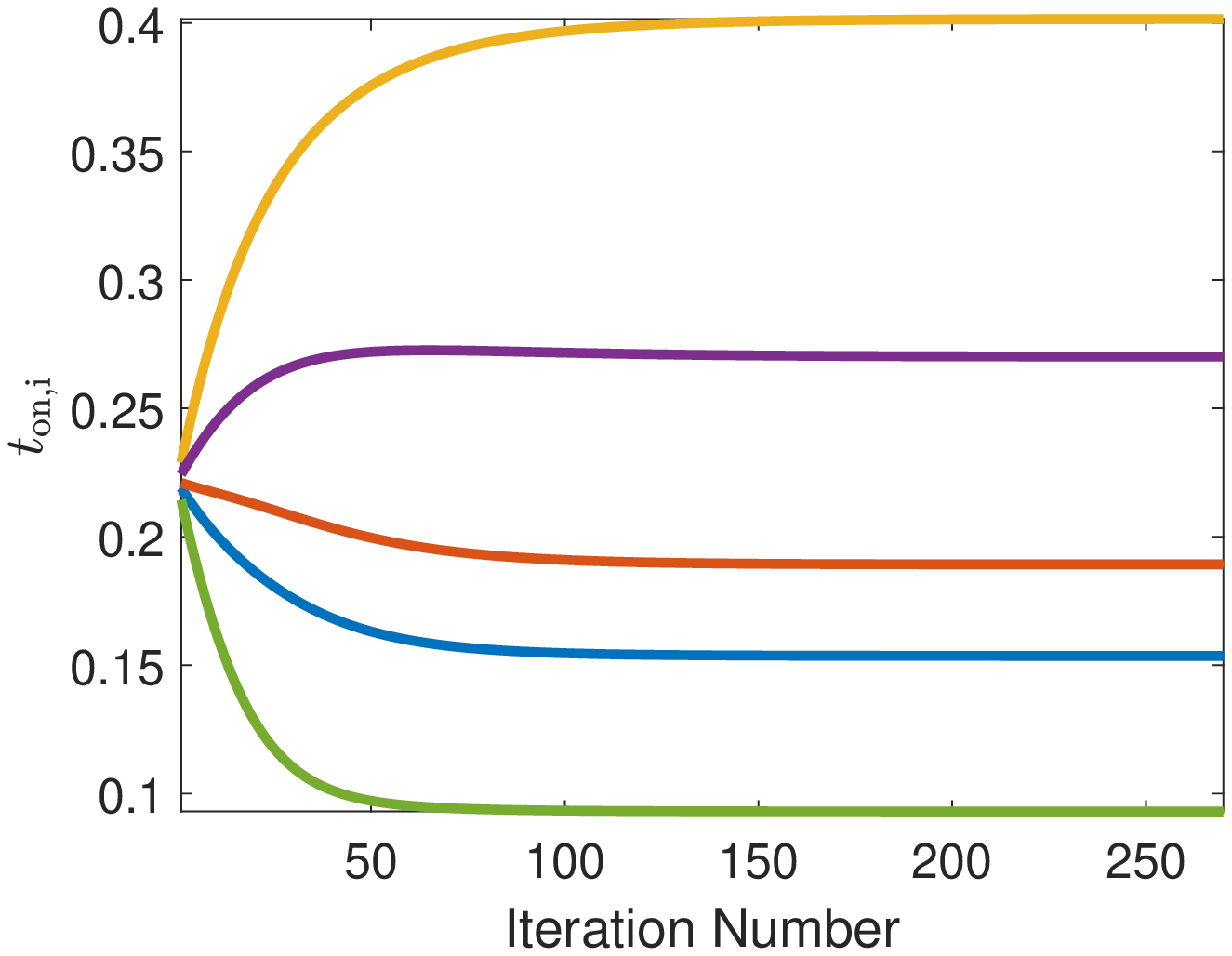}
        \caption{Observation time at the optimal period.}
        \label{fig:time_iteration}
    \end{subfigure}
    \caption{Results of simulating the described algorithm. In (a), the balanced peak uncertainty, as a function of the total cycle period. The red dots mark the values of $T$ that were explored by the golden ratio search. In (b)-(c), we show the the evolution of the peak uncertainty and the observation time for each target at the optimal period.}
    \label{fig:results_1_one_target}
    \end{figure*}

    \section{Simulation Results}
    \label{sec:results}
    We have simulated the model described in \eqref{eq:dynamics_phi} and \eqref{eq:observation_model_ij}, with parameters indicated in Table \ref{tab:sim_param}. Note that we assigned colors to targets in order to identify them in the figures. For simplicity, the internal states of the targets were assumed to be scalars. The target locations were drawn from a uniform distribution in $[0,0.5]\times[0,0.5]$. The targets location are displayed in Fig. \ref{fig:traj_simulation}. The graph was assumed to be fully connected with edge costs being the Euclidean distance between two targets. For the definition of the optimization goal as in \eqref{eq:cost}, we used $g_i(\xi)=\xi$, $\forall i$, and $\norm{\Gamma}=|\Gamma|$.
\begin{table}[h!]
\centering
\caption{Parameters for in the simulation.}
\label{tab:sim_param}
\begin{tabular}{ |c||c|c|c|c|c| } 

 \hline
 Target & 1 & 2 & 3 & 4 & 5  \\ 
 \hline
 Color& blue & red & yellow & purple & green  \\ 
 \hline
 $A_i$ & 0.3487 & 0.1915 & 0.4612 & 0.2951 & 0.1110  \\
 \hline
 $Q_i$ & 1.1924 & 1.2597 & 0.8808 & 1.7925 & 0.4363  \\
 \hline
 $R_i$ & 2.3140 & 7.1456 & 4.2031 & 5.2866 & 7.5314  \\
 \hline
\end{tabular}
\end{table}

    For the visiting order, we used the optimal TSP tour. Note that the number of targets in this example is small enough that computing the TSP tour is computationally feasible. Then, Algorithm \ref{alg:compute_optimal_visitation_times} was deployed, using $k_p$ in \eqref{eq:consensus_update_discrete} set to $10^{-2}$ and $[T_{\min},T_{\max}]=[0.1t_{\text{travel}},3t_{\text{travel}}]$, where $t_{\text{travel}}$ is the total travel time in one period of the TSP cycle.
    
        \begin{figure}[htp!]
    \centering
        \begin{subfigure}[t]{0.55\columnwidth}
    \vskip 0pt
    \includegraphics[width=\columnwidth]{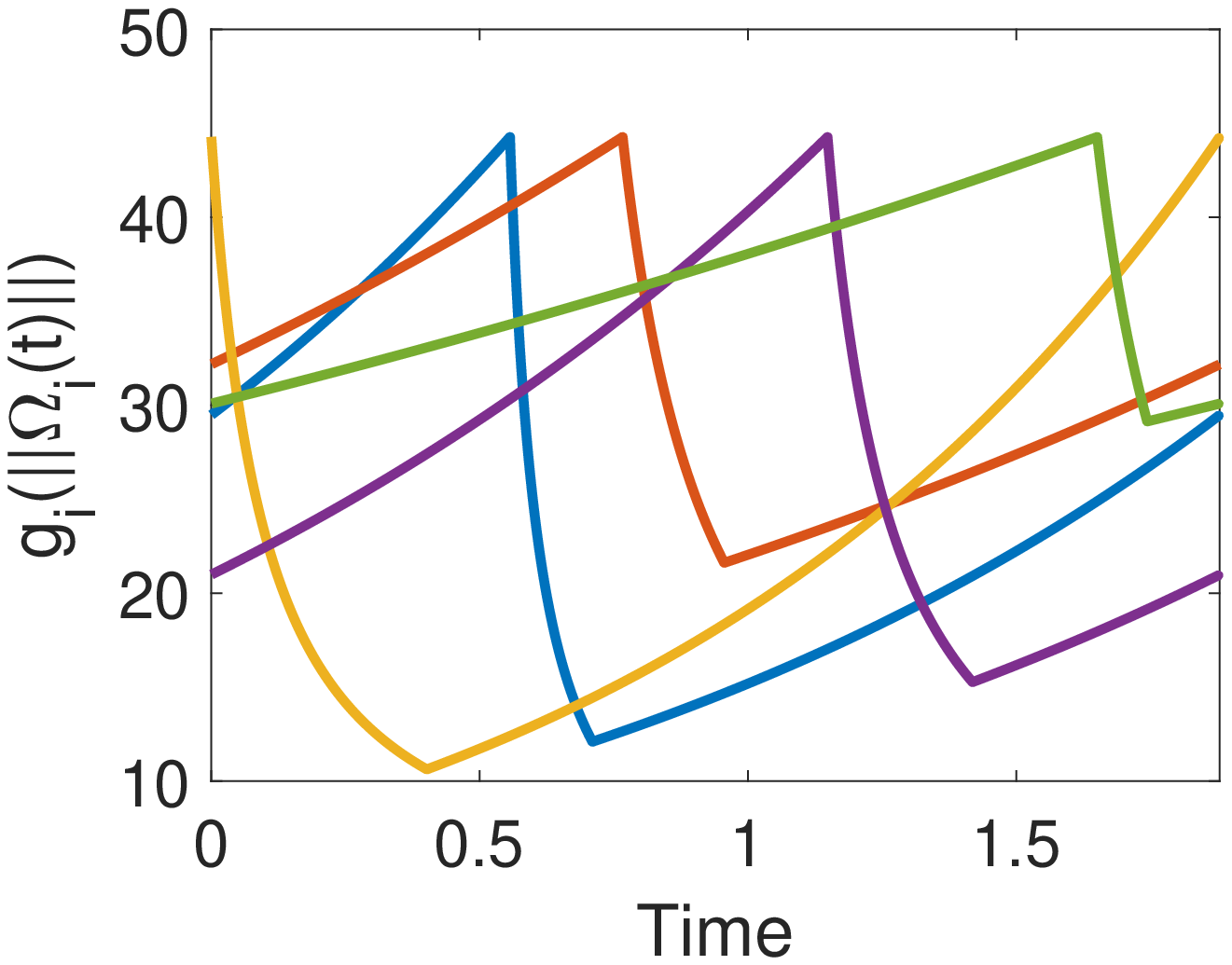}%
    \caption{Covariance over one period.}
    \end{subfigure}%
    \begin{subfigure}[t]{0.45\columnwidth}
    \vskip 5pt
    \includegraphics[width=\columnwidth]{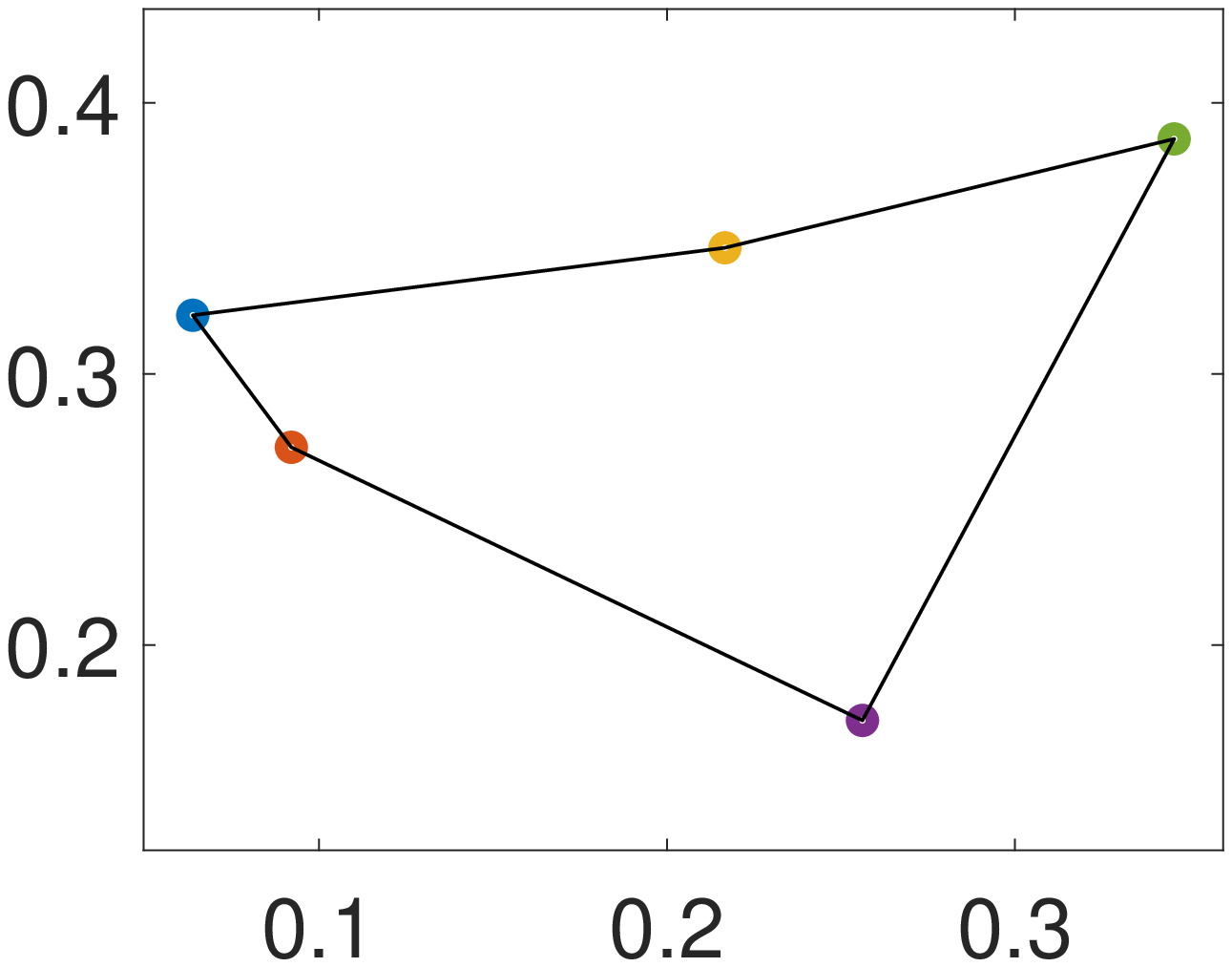}%
    \caption{Agent trajectory (black) and targets location (colored).}
    \label{fig:traj_simulation}%
    \end{subfigure}
    \caption{Results after optimizing the trajectory and dwelling times.}
    \label{fig:results_2_one_target}
    \end{figure}
    
    The results are shown in Figs. 
    \ref{fig:results_1_one_target} and \ref{fig:results_2_one_target}. In particular, Fig. \ref{fig:results_1_one_target} shows the evolution of the norm of the steady-state covariance matrix over one complete period of the agent trajectory, with the optimal distribution of observation times. In Fig. \ref{fig:results_1_one_target}, details of the optimization process are highlighted. In Fig. \ref{fig:peak_period} we can see how the uncertainty behaved as a function of the period (after balancing the observation times among targets by running \eqref{eq:consensus_update_discrete} until convergence). Moreover, this figure also highlights the fact that the golden ratio search scheme was able to efficiently converge to a local minimum. Figs. \ref{fig:cost_iteration} and \ref{fig:time_iteration} show how the observation time and the peak covariance varied while using the update law \eqref{eq:update_law_single_target}. Initially, all targets were visited for the same amount of time. However, as the iteration number increases, some targets are observed for more time than others. As expected, in the final iteration, all the peak covariances have converged to the same value. {In Fig. \ref{fig:results_2_one_target}, we can see the agent trajectory and that the targets indeed achieve the same steady state peak uncertainty.}

    \section{Conclusion and Future Work}
    \label{sec:conclusion}
    In this paper, we have derived a necessary condition for the optimal allocation of observation times of targets in a network being monitored by a mobile agent. In particular, for the set of policies where each target is observed only once, we were able to claim that the optimal visiting sequence is the TSP tour and we derived a computationally cheap and derivative free procedure to optimally allocate visiting times among different targets. This allows for scalable optimization of the monitoring performance considering heterogeneous targets.
    
    In future work, we want to investigate whether we can find a similar procedure to allocate observation times for trajectories where targets are allowed to be visited multiple times. Moreover, we intend to develop greedy schemes to efficiently search over different visiting sequences. Finally, we also intend to extend this approach to multi-agent scenarios.

\bibliographystyle{IEEEtran}
\bibliography{references.bib}
\end{document}